\newcommand{\shortv}[1]{#1}
\newcommand{\fullv}[1]{}
\newcommand{\strat}{\mathbf{s}}
\newcommand{\SSigma}{S}
\newcommand{\ssigma}{s}
\newcommand{\G}{\mathcal{G}}
\newcommand{\NSD}{\mathit{NSD}}
\newcommand{\EU}{\mathit{EU}}
\newcommand{\supp}{\text{supp}}
\newcommand{\alphaCR}{a^{CR}}
\newcommand{\deltaCR}{b^{CR}}
\begin{document}


\newtheorem{theorem}{Theorem}[section]
\newtheorem{definition}[theorem]{Definition}
\newtheorem{fact}[theorem]{Fact}
\newtheorem{axiom}[theorem]{Axiom}
\newtheorem{example}[theorem]{Example}
\newtheorem{proposition}[theorem]{Proposition}
\newtheorem{remark}[theorem]{Remark}
\newtheorem{corollary}[theorem]{Corollary}
\newtheorem{lemma}[theorem]{Lemma}
\newtheorem{problem}[theorem]{Problem}
\newtheorem{note}[theorem]{Note}
\newtheorem{conjecture}[theorem]{Conjecture}
\newtheorem{assumption}[theorem]{Assumption}
\newtheorem{question}[theorem]{Question}
\newtheorem{property}[theorem]{Property}
\newcommand{\commentout}[1]{}


\shortv{\newcommand{\citeyear}{\cite}}
\shortv{\newcommand{\shortcite}{\cite}}
\newcommand{\PR}{\mathcal{PR}}
\newcommand{\wbox}{\mbox{$\sqcap$\llap{$\sqcup$}}}




\title{Translucent Players: Explaining 
Cooperative Behavior in Social Dilemmas}


\shortv{\author{
Valerio Capraro
\institute{Centre for Mathematics and Computer Science (CWI)\\
Amsterdam, 1098 XG, The Netherlands}
\email{V.Capraro@cwi.nl}
\and
Joseph Y. Halpern
\institute{Cornell University\\
Computer Science Department\\
Ithaca, NY 14853}
  \email{halpern@cs.cornell.edu}
}}
\fullv{
\author{Valerio Capraro\\
Centre for Mathematics and Computer Science (CWI)\\
 Amsterdam, 1098 XG, The Netherlands\\
V.Capraro@cwi.nl
\and
Joseph Y. Halpern\\
   Cornell University\\
   Computer Science Department\\
   Ithaca, NY 14853\\
   halpern@cs.cornell.edu}
}
\shortv{
\def\titlerunning{Translucent Players}
\def\authorrunning{V. Capraro \& J. Y. Halpern}
}

\date{ }

\maketitle

\begin{abstract}
In the last few decades, numerous experiments have shown that humans
do not always behave so as to maximize their material
payoff. Cooperative behavior when non-cooperation is a dominant
strategy (with respect to the material payoffs) is particularly
puzzling. Here we propose a novel approach to explain cooperation,
assuming what Halpern and Pass \citeyear{HaPa13} call 
\emph{translucent players}. Typically, players are assumed to be \emph{opaque}, in the sense
that a deviation by one player in a normal-form game does not affect
the strategies used by 
other players. But a player may believe that if he switches from one
strategy to another, the fact that he chooses to switch may be visible
to the other players.  For example, if he chooses to defect in
Prisoner's Dilemma, the other player may sense his guilt.  We show
that by assuming translucent players, we can recover many of the
regularities observed in human behavior in well-studied games such as
Prisoner's Dilemma, Traveler's Dilemma, Bertrand Competition, and the
Public Goods game. 
\end{abstract} 






\section{Introduction}\label{sec:intro}

In the last few decades, numerous experiments have shown that humans do not
always behave so as to maximize their material 
payoff.  Many alternative models have consequently been proposed to
explain  deviations from the money-maximization paradigm. 
Some of them assume that players are boundedly rational 
and/or make mistakes in the computation of the expected utility of a strategy
\cite{CHC04,CG-Cr-Br01,HP08,MK-Pa95,St-Wi94} ; yet others assume 
that players have other-regarding preferences \cite{Bo-Oc,Ch-Ra,Fe-Sc};
others define radically different solution concepts, assuming that
players do not try to maximize their payoff, but rather try to
minimize their regret \cite{HP11b,RS08}, or maximize the forecasts
associated to coalition structures \cite{Ca,CVPJ}, or maximize the total
welfare \cite{AptSchafer14,RH}.  (These references only scratch the surface;
a complete bibliography would be longer than this paper!)

\commentout{
 one more model of human behavior, taking its origin
from the influential Nature paper by Rand, Green, and Nowak
\cite{RGN}. In this paper and its subsequent works
\cite{R13a,R13b,R13c}, Rand and colleagues have provided evidence that
behavior of experimental subjects in the lab is strongly influenced by
their experience in everyday interactions. The Social Heuristics
Hypothesis affirms that people internalize strategies that are more
successful in everyday interactions and use them as default strategies
in the lab.  

In this light, it becomes important to formalise the kind of reasoning
people are likely to follow in a real-life scenario and try to develop a
different equilibrium theory which hopefully better explains the
experimental findings that have been collected in the last decades.  
}


Cooperative behavior in one-shot anonymous games is particularly
puzzling, especially in games where 
non-cooperation is a dominant strategy (with
respect to the material payoffs): why should you pay a cost to help
a stranger, when no clear direct or indirect reward seems to be at stake?
Nevertheless, 
the secret of success of our societies is largely due to our ability
to cooperate. We do not cooperate only with family members, friends,
and co-workers. A great deal of cooperation can be observed also in
one-shot anonymous interactions \cite{Camerer03}, where none of the
five rules of cooperation proposed by Nowak \citeyear{Nowak06} seems
to be at play.   

Here we propose a novel approach to explain cooperation, based on work 
of Halpern and Pass \citeyear{HaPa13} 
and Salcedo \citeyear{Salcedo}, assuming what Halpern and Pass call
\emph{translucent players}.   Typically,
players are assumed to be \emph{opaque}, in the sense
that a deviation by one player in a normal-form game does not affect
the strategies used by 
other players. But a player may believe that if he switches from one 
strategy to another, the fact that he chooses to switch may be visible to
the other players.  For example, if he chooses to defect in 
Prisoner's Dilemma, the other player may sense his guilt.  (Indeed, it is well
known that there are facial and bodily clues, such as increased pupil size,
associated with deception; see, e.g., \cite{EkmanFriesen69}.
Professional poker players are also very sensitive to
\emph{tells}---betting patterns and physical demeanor that reveal
something about a player's hand and strategy.)%
\footnote{The idea of translucency is motivated by some of
the same concerns as Solan and Yariv's \citeyear{SY04}
\emph{games with espionage}, but the technical details are
quite different.  A game with espionage is
a two-player extensive-form game 
that extends an underlying 
normal-form game by adding a step where player 1 can purchase some
noisy information about player 2's planned move.  Here, the information is
free and all players may be translucent.  Moreover, the effect of the
translucency is modeled by the players' counterfactual beliefs rather
than by adding a move to the game.}

We use the idea of translucency to explain cooperation.  This
may at first seem somewhat strange. Typical lab experiments of social
dilemmas consider anonymous players, who play each other over computers.
In this setting, there are no tells.  
%
However, as Rand and his colleagues have argued (see, e.g.,
\shortcite{RGN,R13a}),  
behavior of subjects in lab experiments is strongly influenced by
their experience in everyday interactions.  People  internalize
strategies that are more successful in everyday interactions and use
them as default strategies in the lab. We would 
argue that people do not just internalize 
strategies; they also internalize \emph{beliefs}.  In everyday
interactions, changing strategies certainly affects how other players react
in the future.  Through tells and possible leaks about changes in
plans, it also may affect how other 
players react in current play.  Thus, we would argue that, in everyday
interactions, people assume a certain amount of transparency,
both because it is a way of taking the future into account in real-world
situations that are repeated and because it is a realistic assumption in
one-shot games that are played in settings where players have a great
deal of social interaction.  We claim that players then apply these
beliefs in lab settings where they are arguably inappropriate.

There is experimental evidence that can be viewed as providing support 
for players believing that they are 
transparent.
Gilovich et al. \citeyear{gilovich1998illusion} show that people
tend to overestimate the extent to which others can discern their
internal states.
For instance, they showed that liars overestimate the detectability of
their lies and that people believe that their feelings of disgust are
more apparent than they actually are.  
There is also growing evidence that showing 
people simple images of watching eyes has a marked effect on behavior,
ranging from giving more in Public Goods games to littering less (see
\shortcite{BCHRN13} for a discussion of some of this work and an extensive
list of references).  
One way of understanding these results is that the eyes are making
people feel more transparent.

\commentout{
Halpern and Pass formalize translucency using what they call
\emph{counterfactual structures}.  In a counterfactual structure, each
player $i$ has beliefs about what other players would play if $i$ were
to deviate.  The key point is that, unlike standard frameworks, these
beliefs may change depending on how $i$ deviates.    Like standard
model, counterfactual structures involve a state space;  associated with
each state $\omega$ is the strategy profile $\strat(\omega)$ played at
$\omega$.  Players' beliefs regarding the effect of deviations is
captured using a function $f$ that associates with each state $\omega$,
player $i$, and strategy $\ssigma$ for player $i$ the state
$f(\omega,i,\ssigma')$ that  would result if $i$ deviated from the
strategy $\ssigma$ that he actually uses in state $\omega$ to $\ssigma'$.
(If $\ssigma = \ssigma'$, then $f(\omega,i,\ssigma) = \omega$.)  Each
player $i$ is assumed to have a belief about what other players are
doing (captured by a distribution over states).  The function $f$ induces a
distribution over the states that would result in $i$ switched to
$\ssigma'$, which in turn results in $i$ having beliefs about what the
other players would do if he were to switch to $\ssigma'$.

Halpern and Pass \citeyear{Ha-Pa13} define a solution concept in
counterfactual structures that they call \emph{iterated minimax
domination}, which can be viewed as a generalization of
rationalizability \cite{Ber84,Pearce84}.%
\footnote{Iterated minimax deletion was also defined independently by
Capraro \citeyear{Capraro13} and Salcedo \citeyear{Salcedo}.}
Here we define another
solution concept in counterfactual structures that we call
\emph{translucent equilibrium}; it is arguably more in the 
spirit of Nash equilibrium.  Not surprisingly,
every Nash equilibrium is a translucent equilibrium in a model where
each player $i$ believes that other players' strategies do not change if he
deviates (so that if $\omega' = f(\omega,i,\ssigma')$, then for all
players $j \ne i$ use the same strategy in $\omega$ and $\omega'$).
}

\commentout{
A major difference between idealized strategic situations and real-life
strategic situations is \emph{opacity}. The classical Nash equilibrium
notion is stable under unilateral deviations, which means essentially
that the reasoning of one agent cannot influence the reasoning of the
others: if player $i$ reasons about changing his strategy from $s_i$ to
$s_i'$ in strategy profile $(s_i,s_{-i})$, he assumes that the other
players do not change their strategy and keep staying in $s_{-i}$. On
the other hand, real-life interactions are shaped in a context where
agents are in contact with each other and they can consequently
influence each other by several ways of communication: if player $i$
reasons about changing his strategy from $s_i$ to $s_i'$, the other
players may \emph{see} this reasoning and change their strategy as
well. In other words, players are translucent. 

Using counterfactual structures \cite{St68,Le73}, Halpern and Pass
\cite{Ha-Pa13} made the first step towards this direction and defined
games with translucent players. For these games they defined a notion of
rationality and characterized rational strategies. 
}

\commentout{
Translucent equilibria are possibly mixed strateg profiles; by way of
contrast, when considering the strategies that survive iterated minimax
deletion, the focus is on pure strategies, 
 on strategies of individuals players, rather than strategy profiles.
We take the probabilities in mixed equilibria to characterize the
proportion of subjects that will play a given strategy.  We can then
compare these predicted proportions to what is observed in the lab.
Another key feature of translucent equilibria is that they are defined
relative to a counterfactual structure.  
Counterfactual structures are very flexible.  We  can capture many
situations, depending on how the closest-state function $f$ is defined.
However, this very flexibility takes away from the predictive power of
counterfactual structures and translucent equilibrium.  As a first step
to recovering some predictive power, we focus on
}

We apply the idea of translucency to 
a particular class of games that we call \emph{social
dilemmas} (cf.~\cite{Dawes80}).
  A social dilemma is a normal-form game with two aproperties:
\begin{enumerate}
\item there is a unique Nash equilibrium $\ssigma^N$, which is a pure
strategy profile;
\item there is a unique welfare-maximizing profile $\ssigma^W$,
again a pure strategy profile, 
such that each player's utility if $\ssigma^W$ is played is higher
than his utility if $\ssigma^N$ is played.
\end{enumerate}
These uniqueness 
assumptions are not necessary, but they make definitions
and computations easier. 
Although these restrictions are nontrivial, many of the best-studied 
games in the game-theory literature satisfy them,
including Prisoner's Dilemma, Traveler's Dilemma
\cite{Basu94}, Bertrand Competition, and the Public Goods game.
(See Section~\ref{se:social dilemmas model} for more discussion of
these games.)
 
There are (at least) two reasons why an agent may be concerned about
translucency in a social dilemma: (1) his opponents may discover
that he is planning to defect and punish him by defecting as well, (2)
many other people in his social group (which may or may not include
his opponent) may discover that he 
is planning to defect (or has defected, despite the fact that the game
is anonymous) and think worse of him.  


For definiteness, we focus here on the first point and assume that, in social
dilemmas, players have a degree of belief 
$\alpha$ that they are transparent, so that if they intend to
cooperate (by playing their component of the welfare-maximizing
strategy) and decide to deviate, there is a probability $\alpha$
that another player will detect this, and play her component of
the Nash equilibrium strategy.  
(The assumption that cooperation acts as a default strategy
is supported by experiments showing that people forced to make a
decision under time pressure are, on average, more cooperative than
those forced to made a decision under time delay \cite{RGN,R13a}. Rand
and his colleagues suggest that this is due to the internalization of
strategies that are successful in everyday interactions.)  
\commentout{
(These detections are independent, so
that the probability of, for example, exactly two players 
other than $i$ detecting a deviation by $i$ is 
$\alpha^2(1-\alpha)^{N-3}$, where $N$ is the total number
%
of players).}
We assume that these detections are independent, so
that the probability of, for example, exactly two players 
other than $i$ detecting a deviation by $i$ is 
$\alpha^2(1-\alpha)^{N-3}$, where $N$ is the total number
%
of players.
Of course, if $\alpha = 0$, then we are back at the standard
game-theoretic framework.  We show that, with this assumption, we can
already explain
a number of 
experimental regularities observed in social dilemmas (see 
Section~\ref{se:social dilemmas model}).
We can model the second point regarding concerns about transparency
in much the same way, and would get
qualitatively similar results (see Section~\ref{sec:discussion}).

The rest  of the paper is as follows. 
In Section~\ref{se:definitions}, we formalize the notion of
translucency in a game-theoretic setting.
In Section \ref{se:social dilemmas model}, we define the social
dilemmas that we focus on in this paper; in 
Section \ref{sec:explanation}, we show that
by assuming translucency, we can obtain as predictions of the
framework a number of regularities that have been observed in the
experimental literature. 
We discuss related work in Section~\ref{sec:comparison}.
Section~\ref{sec:discussion} concludes.  
Proofs are deferred 
to the \shortv{full paper,} \fullv{the appendix,} where
we also discuss a solution concept 
that we call 
\emph{translucent equilibrium}, based on translucency, closely related
to the notion of \emph{individual rationality} discussed by Halpern and
Pass \citeyear{HaPa13}, and show how it can be applied in social dilemmas.

\section{Rationality with translucent players}\label{se:definitions}

\commentout{
In this section, we briefly review the definition of counterfactual
structures.  The reader is encouraged to consult \cite{Ha-Pa13} for more
detail and intuition.}
In this section, we briefly define rationality in the presence of
translucency, motivated by the ideas in Halpern and Pass \citeyear{HaPa13}.

Formally, a (finite) normal-form  game $\mathcal G$ is a tuple
\fullv{$(P,\SSigma_1,\ldots,\SSigma_N,u_1,\ldots,u_N)$,}
\shortv{$(P,\SSigma_1,\ldots, \SSigma_N,$ $u_1,\ldots,u_N)$,}
where $P=\{1,\ldots,N\}$ is the set of players, $\SSigma_i$
is the set of  strategies for player $i$,  and $u_i$ is player $i$'s
utility function.    Let $\SSigma = \SSigma_1 \times \cdots \times
\SSigma_N$ and $\SSigma_{-i} = \prod_{j \ne
  i} \SSigma_j$.    We assume that $\SSigma$ is finite
and that $N \ge 2$.  

In standard game theory, it is assumed that a player $i$ has beliefs about 
the strategies being used by other players; $i$ is rational if his
strategy is a best response to these beliefs.  The standard definition
of best response is the following.

\begin{definition}\label{brstandard}
{\rm
A strategy $s_i \in \SSigma_i$ is a best
response to a probability $\mu_i$ on $\SSigma_{-i}$ if, for all
strategies $s'_i$ for player $i$, 
$$\sum_{s'_{-i} \in \SSigma_{-i}} \mu_i(s_{-i}') u_i(s_i,s'_{-i}) \ge 
\sum_{s'_{-i} \in \SSigma_{-i}} \mu_i(s'_{-i}) u_i(s'_i,s'_{-i}).$$
} \shortv{\wbox}
 \end{definition}
Definition~\ref{brstandard} implicitly  assumes that $i$'s beliefs about
what other agents are doing do not change if $i$ switches from $s_i$,
the strategy he was intending to play,
to a different strategy.  (In general, we assume that $i$ always has 
an \emph{intended strategy}, for otherwise it does not make sense
to talk about $i$ switching to a different strategy.)  
So what we really have are beliefs $\mu_i^{s_i,s_i'}$ for $i$ indexed by a pair
of strategies $s_i$ 
and $s_i'$; we interpret $\mu_i^{s_i,s_i'}$ as $i$'s beliefs if he intends
to play $s_i$ but instead deviates to $s_i'$.  Thus, $\mu_i^{s_i,s_i}$
represents $i$'s beliefs if he plays $s_i$ and does not deviate.  
We modify the standard definition of best response by defining best
response 
with respect to a family of beliefs  $\mu_i^{s_i,s_i'}$.  

\begin{definition}\label{brdefinitionnew}  
{\rm 
Strategy $s_i \in \SSigma_i$ is a \emph{best response
    for $i$ with respect to the beliefs $\{\mu_i^{s_i,s_i'}: s_i' \in
    \SSigma_i\}$} if, for all strategies $s_i' \in \SSigma_i$, 
$$\sum_{s'_{-i} \in \SSigma_{-i}} \mu_i^{s_i,s_i}(s_{-i}') u_i(s_i,s'_{-i}) \ge 
\sum_{s'_{-i} \in \SSigma_{-i}} \mu_i^{s_i,s_i'}(s'_{-i}) u_i(s'_i,s'_{-i}).$$
} \shortv{\wbox}
\end{definition}

We are interested in players
who are making best responses to their beliefs, but we define best
response in terms of Definition~\ref{brdefinitionnew}, not
Definition~\ref{brstandard}.  
Of course, the standard notion of best response is
just the special case of the notion above where 
$\mu_i^{s_i,s_i'} = \mu_i^{s_i,s_i}$ for all $s_i'$: a player's beliefs about what other
players are doing does not change if he switches strategies.

\begin{definition}
{\rm A player is \emph{translucently rational} if he best
responds to his beliefs in the sense of Definition
\ref{brdefinitionnew}. 
}
\shortv{\wbox}
\end{definition}

Our assumptions about translucency will be used to determine 
$\mu_i^{s_i,s_i'}$.  For example, suppose that $\Gamma$ is a 2-player
game, player $1$ believes
that, if he were to switch from $s_i$ to $s_i'$, this would be detected by
player 2 with probability $\alpha$, and if 
player $2$ did detect the switch, then player $2$ would switch to 
$s_j'$.  Then $\mu_i^{s_i,s_i'}$ is $(1-\alpha)\mu^{s_i,s_i} +
\alpha\mu'$, where $\mu'$ assigns probability 1 to
$s_j'$; that is, player 1 believes that with probability $1-\alpha$,
player 2 continues to do what he would have done all along (as
described by $\mu^{s_i,s_i}$) and, with probability $\alpha$, player 2
switches to $s_j'$.

\commentout{
\begin{definition}\label{defin:counterfactual structure}
{\rm A finite counterfactual structure appropriate for the game
$\mathcal G$ is a tuple $M = (\Omega, \mathbf{s}, f, \PR_1,\ldots,\PR_N)$,
where: 
\begin{itemize}
\item $\Omega$ is a finite space of states;
\item $\mathbf{s}:\Omega\to \SSigma$
\item For each $\omega\in\Omega$, $\mathcal P\mathcal R_i(\omega)$ is a
probability measure on $\Omega$ satisfy the following properties:
\begin{itemize}
\item[PR1.] $\PR_i(\omega)(\{\omega'\in\Omega :
\mathbf{s}_i(\omega')=\mathbf{s}_i(\omega)\})=1$ (where
$\strat_i(\omega)$ denotes player $i$'s strategy in $\strat(\omega)$;
\item[PR2.] $\PR_i(\omega)(\{\omega'\in\Omega :
\PR_i(\omega')=\PR_i(\omega)\})=1$.
\end{itemize}
These assumptions guarantee that player $i$ assigns probability $1$ to
his actual strategy and beliefs. 
\item $f$ is the closest-state function: given $\omega\in\Omega$, $i\in
P$, $\ssigma'\in \SSigma_i$, $f(\omega,i,\ssigma')\in\Omega$ represents what would
happen if player $i$ switched strategy to $s_i'$ at state $\omega$. The
closest-state function $f$ is required to satisfy the following
properties:
\begin{itemize}
\item[CS1.] $\mathbf{s}_i(f(\omega,i,\ssigma'))=\ssigma_i'$ (so that at
$f(\omega,i,\ssigma_i')$, player $i$ plays $\ssigma_i'$);
\item[CS2.] $f(\omega,i,\mathbf{s}_i(\omega))=\omega$.   \wbox
\end{itemize}
\end{itemize}
}
\shortv{\wbox}
\end{definition}


$\PR_i(\omega)$ defines $i$'s beliefs at $\omega$.  We now define
$\PR_{i,\ssigma'}(\omega)$: $i$'s beliefs at $\omega$ if he were to
switch to strategy $\ssigma'$ by taking
$$\mathcal P\mathcal R_{i,\ssigma'}(\omega)(\omega')=\sum_{\{\omega'' :
f(\omega'',i,\ssigma')=\omega'\}}\mathcal P\mathcal R_i(\omega)(\omega''). 
$$
Intuitively, if he were to switch from to strategy $\ssigma'$ at
$\omega$, the probability that $i$ would assign to state $\omega'$ is
the sum of the probabilities that he assigns to all the states
$\omega''$ such that he beliefs that he would move from $\omega''$ to
$\omega'$ if he used strategy $\ssigma'$.  This is exactly sum of 
$\Pr(\omega)(\omega'')$ for all $\omega''$ such that
$f(\omega'',i,\ssigma') = \omega'$.

We define the expected utility of player $i$ at state $\omega$ in the
usual way as the sum of the product of his expected utility of the
strategy profile played at each state $\omega'$ and the
probability of $\omega'$:
$\EU_i(\omega)=\sum_{\omega'\in\Omega}\mathcal P\mathcal
R_i(\omega)(\omega')u_i(\mathbf{s}_i(\omega),\mathbf{s}_{-i}(\omega')).$%
\protect{\footnote{Given a profile $t = (t_1, \ldots, t_N)$, as usual, we define
$t_{-i}=(t_1,\ldots,t_{i-1},t_{i+1},\ldots,t_N)$.  We extend this
notation in the obvious way to functions like $\strat$, so that, for
example, 
$\strat_{-i}(\omega) = (\strat_1(\omega), \ldots, \strat_{i-1}(\omega),
\strat_{i+1}(\omega),\ldots, \strat_{n}(\omega))$.}

The usual way of defining $i$'s expected utility at $\omega$ if he were
to switch to $\ssigma'$ is to simply replace his actual strategy at
$\omega$ by $\ssigma'$ at all states, keeping the strategies of the other
players the same; that is,
$$\sum_{\omega'\in\Omega}\PR_i(\omega)(\omega')u_i(\ssigma',\mathbf{s}_{-i}(\omega')).$$
In this definition,  player $i$'s beliefs about the strategies that the
other players are using does not change when he switches from
$\strat_i(\omega)$ to $\ssigma'$.  The key point of counterfactual
structures is that this belief may well change. Thus, we define $i$'s
expected utility at $\omega$ if he switches to $\ssigma'$ as
$$
\EU_i(\omega,\ssigma')=\sum_{\omega'\in\Omega}\mathcal
\PR_{i,\ssigma'}(\omega)(\omega')u_i(\ssigma',\mathbf{s}_{-i}(\omega')).  
$$

Finally, we can define rationality in counterfactual structures using
these notions: 
\begin{definition}\label{def:rationality}
{\rm Player $i$ is \emph{rational} at state $\omega$ if, for all
$\ssigma' \in \SSigma_i$,
$$\EU_i(\omega) \ge \EU_i(\omega,\ssigma').$$
\wbox
}
\end{definition}

\section{Translucent equilibrium}\label{se:equilibria}
In this section, we define translucent equilibrium.  We start with some
preliminary definitions.

Given a probability measure $\tau$ on a finite set $T$, let
$\text{supp}(\tau)$ denote the support of $\tau$, that is,  
$$
\text{supp}(\tau)=\{t\in T : \tau(t)\neq0\}.
$$

Note that $\sigma_{-i}$ can 
can be viewed as a probability on
$S_{-i}$, where
$\sigma_{-i}(s_{-i})=\prod_{j\neq i}\sigma_{j}(s_j)$.
Similarly $\sigma$ can be viewed as a probability measure on $S$.
In the sequel, we view $\sigma_{-i}$ and $\sigma$ as probability
measures without further comment (and so talk about their support).

\begin{definition}\label{def:equilibrium}
{\rm A strategy profile $\sigma$ in a game $\G$ is 
\emph{translucent equilibrium} in a counterfactual structure $M =
(\Omega, \mathbf{s}, f, \PR_1,\ldots,\PR_N)$ 
appropriate for $\G$ if there exists a subset $\Omega' \subseteq \Omega$
such that, for each state $\omega$ in $\Omega'$, the following
properties hold:
\begin{enumerate}
\item[TE1.] $\mathbf{s}(\omega)\in\text{supp}(\sigma)$.
\item[TE2.] $\text{supp}(\PR_i(\omega))\subseteq\Omega'$.
\item[TE3.] $\mathbf{s}_{-i}(\PR_i(\omega))=\sigma_{-i}$ (i.e.,
for each strategy profile $s_{-i} \in S_{-i}$, we have 
$$s_{-i}(s_{-i}) = \PR_i(\omega)(\{\omega': \strat_{-i}(\omega') =
s_{-i}\})$). 
\item[TE4.] Each player is rational at $\omega$.
\end{enumerate}
The mixed strategy profile $\sigma$ is a translucent equilibrium of $\G$
if there exists a counterfactual structure $M$ appropriate for $\G$ such
that $\sigma$ is a translucent equilibrium in $M$.}  \wbox
\end{definition}
Intuitively, $\sigma$ is a translucent equilibrium in $M$ if, for each
strategy $s_i$ in the support of $\sigma_i$, the expected utility of
playing $s_i$ given that other players are playing according
$\sigma_{-i}$ is at least as good as switching to some other strategy
$s_i'$, given what $i$ would believe about what strategies the other
players are playing if he were to switch to $s_i'$.  

It is easy to see that every Nash equilibrium of $\G$ is a translucent
equilibrium.  

\begin{proposition} Every Nash equilibrium of $\G$ is a translucent
equilibrium.
\end{proposition}
\begin{proof} 
Given a Nash equilibrium 
$\sigma=(\sigma_1,\ldots,\sigma_n)$, consider the following
counterfactual structure $M_\sigma = (\Omega, \mathbf{s}, f,
\PR_1,\ldots,\PR_N)$:
\begin{itemize}

\item $\Omega$ is the set of strategy profiles in the support of
$\sigma$;
\item $\strat(s) = s$;
\item $\PR_i(s_i,s_{-i})(s_i',s_{-i}') = 
\left\{
\begin{array}{ll}
0 &\mbox{if $s_i' \ne s_i$}\\
\sigma_{-i}(s_{-i}') &\mbox{if $s_i' = s_i$;}
\end{array}
\right.$
\item $f((s_i,s_{-i}),i,s')=(s',s_{-i})$. 
\end{itemize}

It is easy to check that $\sigma$ is a translucent equilibrium in
$M_\sigma$; we simply take $\Omega' = \Omega$.  The fact that $f$ is an
``opaque'' closest-state function, which is not affected by the strategy
used by players, means that rationality in $M$ reduces to the standard
definition of rationality.  We leave details to the reader.  
\end{proof}

Although the fact that we can consider arbitrary counterfactual
structures (appropriate for $\G$) means that many strategy profiles are
translucent equilibria, the notion of translucent equilibrium has some
bite.  For example, the strategy profile $(C,D)$, where player 1
cooperates and player 2 defects, is not a translucent
equilibrium in Prisoner's Dilemma.  If player 1 believes that player 2
is playing defecting with probability 1, there are no beliefs that 1
could have that would justify cooperation.  However, as we shall see,
both $(C,C)$ and $(D,D)$ are translucent equilibria.  This follows from
the characterization of translucent equilibrium that we now give.

\begin{definition}\label{dfn:coherent}
{\rm Given a game $\G = (P,S_1,\ldots, S_N, u_1, \ldots, u_N)$, 
a mixed-strategy profile $\sigma$ in $\G$ is \emph{coherent} if
for all players $i \in P$, all $s_i\in\text{supp}(\sigma_i)$, and all
$s_i'\in S_i$, there is $s_{-i}'\in S_{-i}$ such that 
$$u_i(s_i,\sigma_{-i})\geq u_i(s')$$
(where, of course, $u_i(s_i,\sigma_{-i}) = \sum_{s_{-i}'' \in S_{-i}'}
\sigma_{-i}(s_{-i}'') u_i(s_i, s_{-i}'')$).
} \wbox
\end{definition}
That is, $\sigma$ is coherent if, for all pure strategies for player $i$
in the support of $\sigma_i$, if $i$'s belief about the strategies being
played by the other  players is given by $\sigma_{-i}$, 
strategies is given by there is no obviously better strategy that
$i$ can switch to in the weak sense that, if $i$ contemplates switching
to $s_i'$, there are beliefs that $i$ could have about the other players
(namely, that they would definitely play $s_{-i}'$ in this case) that
would make switching not make to $s_i'$ better than sticking with $s_i$.

It is easy to see that $(C,C)$ and $(D,D)$ in Prisoner's Dilemma are
both coherent;  on the other hand, $(C,D)$ is not.  

\begin{theorem}\label{thm:translucenteq}
The mixed strategy profile $\sigma$ of game $\G$ is coherent iff $\sigma$ is a
translucent equilibrium of $\G$.n
\end{theorem}

\begin{proof}
Fix a game $\G = (\Omega, \mathbf{s}, f, \PR_1,\ldots,\PR_N)$, and
suppose that
$\sigma$ is a coherent strategy profile in $\G$.  We construct a counterfactual
structure $M = (\Omega, \mathbf{s}, f, \PR_1,\ldots,\PR_N)$ as follows:
\begin{itemize}
\item $\Omega = S$;
\item $\strat(s) = s$;
\item $\PR_i(\omega)(\omega') = \left\{
\begin{array}{ll}
1 &\mbox{if $\omega \notin \supp(\sigma_i)$, $\omega = \omega'$}\\
0 &\mbox{if $\omega \notin \supp(\sigma_i)$, $\omega \ne \omega'$}\\
\sigma_{-i}(\strat_{-i}(\omega')) &\mbox{if $\omega \in \supp(\sigma_i)$,
$\strat_i(\omega') = \strat_i(\omega)$}\\
0 &\mbox{if $\omega \in \supp(\sigma_i)$,
$\strat_i(\omega') \ne \strat_i(\omega)$};
\end{array}
\right.$
\item $f(\omega,i,s_i') = \left\{
\begin{array}{ll}
(s_i',\mathbf{s}_{-i}(\omega)) &\mbox{if $\omega \notin
\supp(\sigma_i)$}\\
\omega &\mbox{if $\omega \in \supp(\sigma_i)$, $s_i' = \strat_i(\omega)$}\\
(s_i',s_{-i}') & \mbox{if $\omega \in \supp(\sigma_i)$, $s_i' \ne
\strat_i(\omega)$, where $s_{i}'$ is a}\\ &\mbox{strategy such that
$u_i(\mathbf{s}_i(\omega),\sigma_{-i})\geq u_i(s')$;} \\
&\mbox{such a strategy is
guaranteed to exist since}\\ &\mbox{$\sigma$ is coherent.} 
\end{array}
\right.$
\end{itemize}

We first show that $M$ is a finite counterfactual structure appropriate
for $\G$; in particular, $\PR_i$ satisfies PR1 and PR2 and $f_i$
satisfies CS1 and CS2.  For PR1 and PR2, there are two cases.  If
$\omega \notin \supp(\sigma)$, then $\PR_i(\omega)(\omega) = 1$, so PR1
and PR2 clearly hold.  If $\omega \notin \supp(\omega)$, then 
$\PR_i(\omega)(\omega) > 0$ iff $\strat_i(\omega) = \strat_i(\omega')$.
Moreover, if $\strat_i(\omega) = \strat_i(\omega')$, then it is
immediate from the definition that $\PR_i(\omega) = \PR_i(\omega')$, so
PR2. holds.  That CS1 and CS2 hold is immediate from the definition of $f_i$.

To show that $\sigma$ is a translucent equilibrium in $M$, let
$\Omega' = \supp(\sigma)$.  For each state $\omega \in \Omega'$, TE1
clearly holds.  Note that if $\omega \in
\supp(\sigma)$, then $\PR_i(\omega) = (\strat_i(\omega),
\sigma_{-i}(\omega))$ (identifying the strategy profile with a
probability measure), so TE2 and TE3 clearly hold. 
It remains to show that TE4 holds, that is, 
that every player is rational at every state $\omega\in\Omega'$.

Thus, we must show that $\EU_i(\omega) \ge \EU(\omega,s_i^*)$ for all
$s_i^* \in S_i$.  Note that
$$\begin{array}{lll}
\EU_i(\omega)
&= &\sum_{\omega'\in\Omega}\mathcal
\PR_i(\omega)(\omega')u_i(\mathbf{s}_i(\omega),\mathbf{s}_{-i}(\omega')) \\
&= &\sum_{\{\omega'\in\Omega: \strat_i(\omega') = \strat_i(\omega)\}}
\sigma_{-i}(\strat_{-i}(\omega'))
u_i(\mathbf{s}_i(\omega),\mathbf{s}_{-i}(\omega'))\\ 
&= &\sum_{s_{-i}''\in S_{-i}}
u_i(\mathbf{s}_i(\omega),s_{-i}'')\\ 
&=&u_i(\strat_i(\omega),\sigma_{-i}).
\end{array}
$$
By definition,
$$
\EU_i(\omega,\ssigma_i^*)=\sum_{\omega'\in\Omega}\mathcal
\PR_{i,\ssigma_i^*}(\omega)(\omega')u_i(\ssigma_i^*,\mathbf{s}_{-i}(\omega'))$$
and 
$$\PR_{i,\ssigma'}(\omega)(\omega')=\sum_{\{\omega'' :
f(\omega'',i,\ssigma')=\omega'\}}\mathcal P\mathcal R_i(\omega)(\omega''). 
$$
Now if 
$s_i^*=\mathbf{s}_i(\omega)$, then $f(\omega,i,s_i^*)$.  In this case,
it is easy to check that $\PR_{i,\ssigma_i^*}(\omega) = \PR_i(\omega)$,
so 
$\EU_i(\omega,s_i^*)=\EU_i(\omega) = \EU_i(s_i,\sigma_{-i})$, and TE4
clearly holds.  
On the other hand, if 
$s_i^*\neq\mathbf{s}_i(\omega)$, then 
$$\begin{array}{lll}
\EU_i(\omega,s_i^*)
&=&\sum_{\omega'\in\Omega}\sum_{\{\omega'':f(\omega'',i,s_i^*)=\omega'\}}\PR_i(\omega)(\omega'')u_i(s_i^*,\mathbf{s}_{-i}(\omega'))\\
&=&\sum_{\{\omega'\in\Omega: \strat_i(\omega') = s_i^*\}}
\shortv{\\ &&}
\sum_{\{\omega'':f(\omega'',i,s_i^*)=\omega',\, 
\strat_i(\omega'') = \strat_i(\omega)\}}\sigma_{-i}(\omega'')
u_i(s_i^*,\mathbf{s}_{-i}(\omega'))\\ 
&=&\sum_{\{\omega'\in\Omega: \strat_i(\omega') = s_i^*\}}
\shortv{\\ &&}
\sum_{\{\omega'':f(\omega'',i,s_i^*)=\omega',\, 
\strat_i(\omega'') = \strat_i(\omega)\}}\sigma_{-i}(\omega'')
u_i(f(\omega'',i,s_i^*)).\\ 
\end{array}
$$
By definition, $u_i(f(\omega'',i,s_i^*))\leq
u_i(\strat_i(\omega''),\sigma_{-i})=u_i(\strat_i(\omega),\sigma_{-i})$. Thus,
$$\begin{array}{lll}
\EU_i(\omega,s_i^*)
&\le&\sum_{\{\omega'\in\Omega: \strat_i(\omega') =
s_i^*\}}\sum_{\{\omega'':f(\omega'',i,s_i^*)=\omega',\, 
\strat_i(\omega'')= \strat_i(\omega)\}}\sigma_{-i}(\omega'')
u_i(\strat_i(\omega),\sigma_{-i})\\ 
&=&u_i(\strat_i(\omega),\sigma_{-i}) \sum_{\{\omega'\in\Omega: \strat_i(\omega') =
s_i^*\}}\sum_{\{\omega'':f(\omega'',i,s_i^*)=\omega',\, 
\strat_i(\omega'') = \strat_i(\omega)\}}\sigma_{-i}(\omega'')\\
&=&u_i(\strat_i(\omega),\sigma_{-i}).
\end{array}
$$
This completes the proof that TE4 holds, and the proof of the ``only if''
direction of the argument

The ``if'' is actually much simpler. Suppose, by way of contradiction,
that $\sigma$ is not coherent.  Then
there is a player $i$ and a strategy $s_i\in\text{supp}(\sigma_i)$ such
that for all $s_{-i}' \in S_i$, we have $u_i(s_i,\sigma_{-i})<u_i(s')$. It
follows that, for all counterfactual structures $M$, no matter what the
beliefs and the closest-state functions are in $M$,  
it is always strictly profitable for player $i$ to switch strategy from
$s_i$ to $s_i'$. Consequently, $i$ is not rational at a state $\omega$
such that $s_i(\omega)=s_i$, contradicting TE4.
\end{proof}

We conclude by comparing translucent equilibrium to iterated minimax
domination.  We begin by reviewing the relevant definitions.

\begin{definition} {\rm Strategy $s_i$ for player $i$ in game $\G
= (P,\SSigma_1,\ldots,\SSigma_N,u_1,\ldots,u_N)$ is 
\emph{minimax dominated with respect to $S'_{-i}\subseteq S_{-i}$} iff
there 
exists a strategy $s'_i \in S_i$ such that  
$$\min_{t_{-i} \in
S'_{-i}} u_i(s'_i, t_{-i}) > \max_{t_{-i} \in 
S'_{-i}} u_i(s_i, t_{-i}).$$}
\wbox
\end{definition}

\noindent Thus, player $i$'s strategy $s_i$ is minimax dominated
with respect to $S'_{-i}$ iff there exists
a strategy $s_i'$ for player $i$ such that the worst-case payoff for
player $i$ if he uses $s'$ is strictly better than his best-case
payoff if he uses $s$, given that the other players are restricted to using a
strategy in $S'_{-i}$.

\begin{definition} Given a game $\G =
(P,\SSigma_1,\ldots,\SSigma_N,u_1,\ldots,u_N)$,  
define $\NSD_j^k(\G)$ inductively: let $\NSD_j^0(\G) =
S_j$ and let $\NSD_j^{k+1}(\G)$
consist of the strategies in $\NSD_j^{k}(\G)$ not minimax 
dominated with respect to $\NSD_{-j}^{k}(\G)$.
Strategy $s \in S_i$ \emph{survives iterated minimax domination}
if $s \in \cap_k\NSD_i^k(\G)$.  \wbox
\end{definition}

As these definitions show, only pure strategies of individual players
survive (or do not survive) iterated minimax domination.  It is not hard
to show that both $C$ and $D$ survive iterated minimax domination.  But,
as we have observed, although $(C,C)$ and $(D,D)$ are translucent
equilibria in Prisoner's Dilemma, $(C,D)$ is not.  So not every profile
made up of strategies that survive iterated minimax domination is a
translucent equilibrium.  On the other hand, as the following example
shows, the strategies in a profile that is a translucent equilibrium may
not survive iterated minimax domination.  

\begin{example} {\rm Consider the two-player game where $S_1 =
\{s_1^1,s_2^1,s_3^1\}$, $S_2 = \{s_1^2,s_2^2,s_3^2\}$, and the utilities
are defined by the following table:
\begin{table}[htb]
\begin{center}
\begin{tabular}{|| c | c | c | c ||}
\hline
{} & $s_1^2$ & $s_2^2$ & $s_3^2$\\
\hline
$s_1^1$ & (1,1) & (1,2) & (1,3) \\
\hline

$s_2^1$ & (2,1) & (2,2) & (2,3) \\
\hline
$s_3^1$ & (3,1) & (3,2) & (3,3)) \\
\hline
\end{tabular}
\end{center}
\end{table}
Thus, if player $i$ plays $s_i^k$, then his utility is $k$, independent
of what the other player does.

It easily follows from Theorem~\ref{thm:translucenteq} that
$(s_1^2,s_2^2)$ is a translucent equilibrium, since it is coherent (we
can take $s'$ in Definition~\ref{dfn:coherent} to be $(s_1^1,s_1^2)$.
On the other hand, $s_1^j$ and $s_2^j$ are minimax dominated by $s_3^j$.
\wbox }
\end{example}
}

}

\section{Social dilemmas}\label{se:social dilemmas model}

Social dilemmas are situations in which there is a tension
between the collective interest and individual interests: every
individual has an 
incentive to deviate from the common good and act selfishly, but if
everyone deviates, then they are all worse off. 
Many personal and professional relationships, the depletion of natural
resources, climate 
protection, the security of energy supply, and price competition in markets
can all be viewed as instances of social dilemmas.

As we said in the introduction, we formally define a social dilemma
as a normal-form game with a unique Nash equilibrium and a unique
welfare-maximizing profile, both pure strategy profiles, 
such that each player's utility if $\ssigma^W$ is played is higher
than his utility if $\ssigma^N$ is played.
While this is
a quite restricted set of games, it includes many that have been quite
well studied.  Here, we focus on the following games:
\begin{description}
\item[\textbf{Prisoner's Dilemma.}]  Two players can either cooperate
($C$) or defect ($D$).  To relate our results to experimental results on
Prisoner's Dilemma, we think of cooperation as meaning that a player
pays a cost $c > 0$ to give a benefit $b>c$ to the other player.  If a
player defects, he pays nothing 
and gives nothing.  Thus, the payoff of $(D,D)$ is
$(0,0)$, the payoff of $(C,C)$ is $(b-c,b-c)$, and the payoffs of $(D,C)$ and
$(C,D)$ are $(b,-c)$ and $(-c,b)$, respectively.  
The condition $b>c$ implies that $(D,D)$ is the unique Nash equilibrium
and $(C,C)$ is the unique welfare-maximizing profile.  

\item[\textbf{Traveler's Dilemma.}] Two travelers have identical
luggage, which is damaged (in an identical way) by an airline.  The
airline offers to recompense them for their luggage. 
They may ask for any dollar amount between $L$ and $H$
(where $L$ and $H$ are both positive integers).
There is only one catch.  If they ask for the same amount, then that is
what they will both receive.  However, if they ask for different
amounts---say one asks 
for $m$ and the other for $m'$, with $m < m'$---then whoever asks
for $m$ (the lower amount) will get $m+b$ ($m$ and a bonus of $b$),
while the other player gets $m-b$: the lower amount and a penalty of
$b$.  It is easy to see that $(L,L)$ is 
the unique Nash equilibrium, while $(H,H)$ maximizes social welfare,
independent of $b$.

\item[\textbf{Public Goods game.}] $N\geq2$ contributors are endowed
with 1 dollar  each; they must simultaneously decide how much, if
anything, to contribute to a public pool. 
(The contributions must be in whole cent amounts.)
The total contribution pot is then multiplied by a
constant strictly between 1 and $N$,
and then evenly redistributed among all players.
\footnote{We thus consider only \emph{linear} Public Goods
  games. This choice is motivated by the fact that our purpose is to
  compare the predictions of our model with experimental data. Most
  experiments have adopted linear Public Goods games, since they have
  much easier instructions and thus they minimize noise due to
  participants not understanding the rules of the game.} So 
the payoff of player $i$ is
$u_i(x_1,\ldots,x_N)=1-x_i+\rho(x_1+\ldots+x_N)$, where 
$x_i$ denotes $i$'s contribution, and $\rho\in\left(\frac1N,1\right)$
is the \emph{marginal return}.  
%
(Thus, the pool is multiplied by $\rho N$ before being split
evenly among all players.)  Everyone contributing nothing to
the pool is the unique Nash equilibrium, and everyone contributing their
whole endowment to the pool is the unique welfare-maximizing profile.

\item[\textbf{Bertrand Competition.}] $N\geq2$ firms compete to sell
their identical product  at a price between the ``price floor'' $L\geq 2$
and the ``reservation value'' $H$.
(Again, we assume that $H$ and $L$ are integers, and all prices must
be integers.)
 The firm that chooses the lowest
price, say $s$, sells the product at that price, getting a payoff of
$s$, while all other firms get a payoff of 0. If there are ties,  
then the sales are split equally among all firms that choose the lowest
price.   Now everyone choosing $L$ is the unique Nash equilibrium, and
everyone choosing $H$ is the unique welfare-maximizing profile.%
\footnote{We require that $L \ge 2$ for otherwise we would not have a
  unique Nash equilibrium, a condition we imposed on Social Dilemmas.
If $L = 1$ and $N=2$, we get two Nash
  equilibria: $(2,2)$ and $(1,1)$; similarly, for $L=0$, we also get
  multiple Nash equilibria, for all values of $N \ge 2$.}

\end{description}


From here on, we say that a player \emph{cooperates} if he plays
his part of the 
welfare-maximizing strategy profile and
\emph{defects} if he plays his part of the Nash equilibrium strategy profile.

While Nash equilibrium predicts that people should always defect in
social dilemmas, in practice, we see a great deal of cooperative
behavior; that is, people often play (their part of) the
welfare-maximizing 
profile rather than (their part of) the Nash equilibrium profile.  Of
course, there have been many attempts to explain this.
Evolutionary theories may explain
cooperative behavior among genetically related individuals \cite{H64} or
when future interactions among the same subjects are
likely \cite{NS,T71}; see \cite{Nowak06} for a review of the five rules of cooperation. However, we
often observe cooperation even in
one-shot anonymous experiments 
among unrelated players
\cite{Rapoport}.



Although we do see a great deal of cooperation in these games, we do
not always see it.  
Here are some of the regularities that have been observed:
\begin{itemize}
\item The degree of cooperation in the Prisoner's Dilemma depends
positively on the benefit of mutual cooperation and negatively on the
cost of cooperation \cite{capraro2014heuristics,EZ,Rapoport}.  
\item The degree of cooperation in the Traveler's Dilemma depends
negatively on the bonus/penalty \cite{CGGH99}.
\item The degree of cooperation in the Public Goods game depends
positively on the constant marginal return \cite{Gu,IWT}. 
\item The degree of cooperation in the Public Goods game depends
positively on the number of players \cite{barcelo2015group,IWW,Ze}. 
\item The degree of cooperation in the Bertrand Competition depends
negatively on the number of players \cite{DG02}. 

\item The degree of cooperation in the Bertrand Competition depends
  negatively on the price floor \cite{D07}. 
\end{itemize}

\section{Explaining social dilemmas using translucency}\label{sec:explanation}

\commentout{
As we suggested in the introduction, we hope to use translucent
equilibrium to explain cooperation.  To do this, we need to construct
counterfactual structures that captures some of the intuition we have
about social welfare games.

\commentout{
We make the following hypotheses:

\begin{enumerate}
\item We consider $n$-player games such that:
\begin{itemize}
\item There is a unique welfare-maximizing profile of strategies $(w_1,\ldots,w_N)$;
\item There is a unique Nash equilibrium $(o_1,\ldots,o_N)$.
\end{itemize} 
\item Players are completely anonymous. This implies that they cannot
influence one another, that is, the reasoning of one player is
completely independent of the reasoning of the other players. 
\item Each player believes that the other players have a tension between
playing their optimal strategy and playing the welfare maximizing
strategy. 
\end{enumerate}
}

%
} 
 
\commentout{
Given a social dilemma $\G = (P,S_1,\ldots, S_N, u_1, \ldots, u_N)$, let 
$(s^N_1,\ldots, s^N_N)$ be the unique Nash equilibrium,
and let  $(s^W_1,\ldots, s^W_N)$ be the unique welfare-maximizing profile.
We now define a parameterized family of counterfactual structures
$M^\G(\sigma,\alpha) = 
(\Omega,\strat,\PR_1,\ldots,\PR_N)$, where $\alpha = 
(\alpha_1,\ldots, \alpha_N)$, $\alpha_i \in [0,1]$.
Intuitively, $\alpha_j$ describes how likely player $j$ is to
learn about a deviation (i.e., how ``translucent'' the other players
are to $j$) and $\sigma=(\sigma_{-i})_{i}$ describes what player $i$ believes the other players are going to do. We assume that if $j$ detects a deviation on the part of
any player, then she will play $s_j^N$. We also assume that the beliefs are supported on profiles of strategies $(s_1,\ldots,s_N)$, where $s_i\in\{s_i^N,s_i^W\}$. Moreover, we assume that $\sigma_{-i}$ is a power measure, in the sense that, for every $i,j$, with $i\ne j$, player $i$ believes that Player $j$ will be playing the strategy $\sigma_{i,j}:=p_{i,j}s_{j}^W+(1-p_{i,j})s_j^N$, where 
\begin{enumerate}
\item $p_{i,j}=:p_{-i}$ does not depend on $j$.
\item $\sigma_{-i}=\otimes_{j\ne i}\sigma_{i,j}$.
\end{enumerate} 
There are clearly a number of assumptions
being made here; we discuss these assumptions below.  We first define
$M^\G(\alpha, \PR')$ formally:
\begin{itemize}
\item $\Omega = S\times \{0,1\}^{n}$.  (The second component of the
state, which is an element of $\{0,1\}^{n}$, is used to determine the
closest-state function.  Roughly speaking, if $v_j = 1$, then player
$j$ learns  about a deviation if there is one; if $v_j = 0$, he does not.)
\item $\strat((s,v)) = s$.
\item $f((s,v),i,s_i^*) = 
\left\{
\begin{array}{ll}
(s,v) &\mbox{if $s_i = s_i^*$,}\\
(s',v) &\mbox{if $s_i \ne s_i^*$, where $s'_i = s_i^*$ and for $j
\ne i$,}\\
&\mbox{$s'_j = s_j$ if $v_j = 0$ and $s'_j = s^N_j$ if $v_j = 1$.}
\end{array}
\right.$

\noindent Thus, if player $i$ changes strategy from $s_i$ to $s_i'$, $s_i'\neq
 s_i$, then each other player $j$ either deviates to his component of
the Nash equilibrium or continues with his current strategy, 
depending on whether $v_j$ is 0 or 1. Roughly speaking, he switches to his
component of the Nash equilibrium if he learns about a deviation

(i.e., if $v_j = 1$).
\item $\PR_i(s,v)(s',v') =\left\{
\begin{array}{lll}
0 &\mbox{if $s_i \ne s'_i$ or $v_i \ne v_i'$,}\\
\sigma_{-i}(s'_i)\Pi_{\{j \ne i: v_j = 1\}}\alpha_j\Pi\, _{\{j \ne
i: \, v_j = 0\}}(1-\alpha_j) &\mbox{otherwise.}
\end{array}
\right.$
\noindent Thus, the probability of the $s'$ component of the state is
determined 
by $\sigma$ (with the standard assumption that player $i$ knows his
own strategy).  The probability of the $v'$ component is determined by
assuming that each player $j$ independently learns about a deviation
by $i$ with probability $\alpha_j$.\footnote{For simplicity, we are
  assuming that the probability that $j$ learns about a deviation by
  $i$ is the same for each player $i$.  More generally, we could have
  a probability $\alpha_{ji}$ for the probability that $j$ learns
  about a deviation by $i$.  Nothing significant would change in our analysis.}
\end{itemize}
\commentout{
\noindent Thus, if player $i$ changes strategy from $s_i$ to $s_i'$, $s_i'\neq
 s_i$, then each other player $j$ deviates to either his component of
the Nash equilibrium or his component of welfare-maximizing profile,
depending on whether $v_j$ is 0 or 1.
\commentout{
\textbf{Beliefs.}
To define the beliefs $\PR_i(\omega)$, fix state $\omega=(s,v)$ and
player $i$. As we mentioned earlier, the idea is that player $i$ assigns
probability $\alpha_i(j,\omega)\in[0,1]$ to the event that player $j$
($j\neq i$) changes from strategy $s_j$ to his or her Nash equilibrium
strategy $o_j$ and probability $1-\alpha_i(j,\omega)$ to the event that
player $j$ changes strategy from $s_j$ to his or her optimal
welfare-maximizing strategy $w_j$. We treat these probabilities as
parameters of the model. Starting from them, we can define
$\PR_i(\omega)$ as follows. Denote $R_i(s_i)$ the set of profiles of
strategies $s'$ such that $s_i'=s_i$ and $s_j'\in\{o_j,w_j\}$. Using the
hypothesis that agents cannot influence one another, we can compute the
probability that agents end up playing the profile of strategies $s'\in
R_i(s_i)$. Let $r_i(s')=\{j\in P\setminus\{i\}: s_j'=o_j\}$. The profile
of strategies $s'$ is reached with probability  
\begin{align}
\beta_i(\omega,s'):=\prod_{j\in r_i(s')}\alpha_i(j,\omega)\prod_{j\in P\setminus r_i(s')\setminus\{i\}}(1-\alpha_i(j,\omega)).
\end{align}
}
\item $\PR_i(s,v)(s',v') =\left\{
\begin{array}{lll}
0 &\mbox{if $s_i \ne s'_i$ or $v_i \ne v_i'$,}\\
\PR'_i(s_i)(s'_{-i})\prod_{j\ne i} (v'_j\alpha_i +
(1-v'_j)(1-\alpha_i))
&\mbox{if $s_i = s'_i$, $v_i = v'_i$.}
\end{array}
\right.$
\end{itemize}

We must check that this $G^\G(\sigma,\alpha)$ is a indeed a counterfactual
structure.  Clearly $f$ satisfies CS1 and CS2.  
It is also easy to see from the definition that $\PR_i(s,v)(s',v') > 0$
only if $s_i = s_i'$; moreover, if $s_i = s_i'$, then $\PR_i(s,v) = 
\Pr(s',v')$.  PR1 and PR2 immediately follow once we show that
$\PR_i(s,v)$ is a probability measure, which is done in the following
lemma.  

\begin{lemma}\label{lem}
For all $i\in P$ and all states $(s,v) \in\Omega$, 
$\PR_i(s,v)$ is a probability measure.
\end{lemma}

\begin{proof}
It clearly suffices to show that $\sum_{(s',v') \in \Omega} \PR_i(s,v)
(s',v') = 1$.  Since $\PR_i(s,v)(s',v') = 0$ if $s_i' \ne s_i$, it
suffices to show that $$\sum_{\{(s',v') \in \Omega: s_i' = s_i\}}
\PR_i(s,v) (s',v') = 1.$$  Now 
$$\begin{array}{lll}
&\sum_{\{(s',v') \in \Omega: s_i' = s_i, \, v_i' = v_i\}} \PR_i(s,v) (s',v')\\
= &\sum_{s'_{-i} \in S_{-i}} \sum_{v'_{-i} \in \{0,1\}^{N-1}}
\PR_i(s,v)((s_i,s'_{-i}),(v_i,v_i'))\\
= &\sum_{s'_{-i} \in S_{-i}} \sum_{\{v_{-i}' \in \{0,1\}^{N-1}: \}}
\sigma_{-i}(s'_{-i})
\prod_{\{j\ne i: \, v'_j = 1\}}\alpha_j \prod_{\{j\ne i: \, v'_j =0\{ }(1-\alpha_j )\\
= &\sum_{s'_{-i} \in S_{-i}}  \sigma_{-i}(s'_{-i}) \sum_{\{v_{-i}' \in \{0,1\}^{N-1}: \}}
\prod_{\{j\ne i: \, v'_j = 1\}}\alpha_j \prod_{\{j\ne i: \, v'_j = 0\} }(1-\alpha_j ).
\end{array}
$$

Since $\sigma_{-i}$ is a probability measure on $S_{-i}$, it 
suffices to prove that 
\begin{equation}\label{eq1}
\sum_{v'_{-i} \in \{0,1\}^{N-1}}
\prod_{\{j\ne i: \, v'_j = 1\}}\alpha_j \prod_{\{j\ne i: \, v'_j = 0\}
}(1-\alpha_j ) = 1.
\end{equation}
We proceed by induction on $n$ for $n \ge 2$.  We leave the details to
the reader.  [[I MAY ADD MORE HERE LATER.]]
\commentout{
is $1-\alpha_i$ if $v'_j = 0$.  So we are essentially adding together
all $2^n$ terms of the form $\beta_1 \ldots \beta_N$, where $\beta_j$
is either $\alpha_i$ or $1-\alpha_i$.  
Clearly, if $n=2$, 
$$\sum_{v_{-i} \in \{0,1\}}
\prod_{j\ne i} (v'_j\alpha_i + (1-v'_j)(1-\alpha_i)) = \alpha_i +
(1-\alpha_i) = 1.$$

Suppose that (\ref{eq1}) holds for $n' \ge 2$; we prove it for $n' +
1$.  We assume without loss of generality that $i \ne n'+1$.  Then 
$$\begin{array}{lll}
&\sum_{v'_{-i} \in \{0,1\}^{n'}}
\prod_{j\ne i} (v'_j\alpha_i + (1-v'_j)(1-\alpha_i))\\ 
= &\sum_{v'_{-i} \in \{0,1\}^{n'-1}, v_{n'+1} = 0}
\prod_{j\ne i} (v'_j\alpha_i + (1-v'_j)(1-\alpha_i)) \\
&\ + \sum_{v'_{-i} \in \{0,1\}^{n'-1}, v_{n'+1} = 1}
\prod_{j\ne i} (v'_j\alpha_i + (1-v'_j)(1-\alpha_i)) \\
= &(1-\alpha_i) \sum_{v'_{-i} \in \{0,1\}^{n'}}
\prod_{j\ne i} (v'_j\alpha_i + (1-v'_j)(1-\alpha_i)) \\
& \ + \alpha_i\sum_{v'_{-i} \in \{0,1\}^{n'}}
\prod_{j\ne i} (v'_j\alpha_i + (1-v'_j)(1-\alpha_i)) \\
= &\alpha_i + (1-\alpha_i) \\
= &1.

\end{array}
$$
This completes the proof of (\ref{eq1}) and the lemma.
}
\end{proof}

\section{Applications of the model}\label{se:social dilemmas examples} 
}

We now show that the model described in the previous section allows to explain all patterns observed in experiments on social dilemmas. 

We start by reminding the definition of the major social dilemmas belonging to the class of games that we are considering.

For all these games, Nash equilibrium theory predicts that people should defect. Yet experimental studies have systematically rejected this prediction. Nevertheless, cooperation is not casual and typically depends on the payoffs and on the number of agents. Specifically, the following regularities have been observed.

\begin{itemize}
\item The rate of cooperation in the Prisoner's Dilemma depends positively on the benefit $b$ and negatively on the cost $c$ \cite{Rapoport,CJR,EZ}.
\item The rate of cooperation in the Traveler's Dilemma depends negatively on the bonus/penalty $b$ \cite{CGGH}.
\item The rate of cooperation in the Public Goods game depends positively on the constant marginal return \cite{IWT,Gu}.
\item The rate of cooperation in the Public Goods game depends positively on the number of players \cite{IWW,Ze}.
\item The rate of cooperation in the Bertrand Competition depends
  negatively on the number of players \cite{DG02}. 
\end{itemize}
The purpose of this section is to show that the model described in the
previous section predicts all these regularities.

\commentout{
To start, we can say something about the `shape' of the probability $\alpha_i(j,\omega)$ under the assumption that players are anonymous. In state $\omega$, Player $i$ knows only the strategy he or she plays, so \emph{by definition} $\alpha_i(j,\omega)$ cannot depend on the strategies $\textbf{s}_j(\omega)$, with $j\neq i$. Similarly, in state $\omega$ Player $i$ knows that the other players do not know his or her strategy $\textbf{s}_i(\omega)$. Consequently, $\alpha_i(j,\omega)$ does not depend on the strategy $\textbf{s}_i(\omega)$ either. So, essentially, $\alpha_i(j,\omega)$ depends only on the pair of players $(i,j)$. Of course, it would be of great interest to study general situation where players know each other and so this probability will actually depend on the pair $(i,j)$. Under the hypothesis that players are completely anonymous, $\alpha_i(j,\omega)=\alpha_i$ is independent of Player $j$. In this situation, making explicit computations become easier and we indeed obtain that the model proposed predicts all experimental patterns mentioned above. 

\begin{theorem}\label{th:pd}
If $\alpha_1,\alpha_2\leq1-\frac{c}{b}$, then the unique equilibrium
of the Prisoner's Dilemma
is $$\left(\alpha_1D+(1-\alpha_1)C,\alpha_2D+(1-\alpha_2)C\right).$$ 
\end{theorem}

\begin{proof}
We start by observing that Item 2 in the second condition of Definition \ref{def:equilibrium} implies that the only equilibrium in this model can be $\left(\alpha_1D+(1-\alpha_1)C,\alpha_2D+(1-\alpha_2)C\right)$. Other equilibria are not possible. To show that this is indeed an equilibrium, consider the set $\Omega'\subseteq\Omega$, defined by $$\Omega'=\{(s,(o, o)) : s\in S\}.$$ Properties (1) and (2) in Definition \ref{def:equilibrium} are certainly satisfied. It remains to show that players are rational on each of these states. 

We start by showing that $\omega=((C,D),(o,o))$ is rational for Player 1. Since $\PR_1(\omega)= (1-\alpha_1)((C, C), (o,o)) + \alpha_1((C, D), (o,o))$, one has
\begin{align*}
g_1(\omega)\\
&=\sum_{\omega'\in\Omega}\PR_1(\omega)(\omega')u_1(\mathbf{s}_1(\omega), \mathbf{s}_2(\omega'))\\
&= (1-\alpha_1) u_1(C, C) + \alpha_1u_1(C, D)\\
&= (1-\alpha_1)(b-c)-\alpha_1c\\
&= b(1-\alpha_1) - c
\end{align*}
In order to compute
$$
g_1^c(\omega,D)= \sum_{\omega'\in\Omega}\sum_{\omega'':f(\omega'',1,D)=\omega'} \PR_1(\omega)(\omega'')u_1(D,\mathbf{s}_2(\omega'))
$$
observe that there are only two possible $\omega''$ in the support of $\PR_1(\omega)$, namely $((C, D),(o, o))$ and $((C,C),(o,o))$. Either way, we have $f(\omega'',1,D) = ((D,D),(o,o))$ and so we have only one possibility for $\omega'$. Consequently, $g_1^c(\omega,D)$ is a convex combination of $u_1(D,D) = 0$; thus $g_1^c(\omega,D) = u_1(D,D) = 0$. So, in order for the state $\omega$ to be rational, we need the condition $b(1-\alpha-1)-c\geq0$, which corresponds to the hypothesis of the theorem. A similar computation shows that rationality of the other elements of $\Omega'$ does not add any other condition on $\alpha_1$. Clearly, imposing rationality for player 2 gives symmetric conditions on $\alpha_2$.
\end{proof}

\begin{remark}\label{rem:pd}
{\rm Observe that Theorem \ref{th:pd} agrees qualitatively with the first of the experimental regularities mentioned above. As $b$ increases, $1-\frac{c}{b}$ increase to $1$ and so $\alpha_1,\alpha_2$ are more likely to be smaller than it. So cooperation is easier, as $b$ increases. Analogously, cooperation is more difficult as $c\to b$. }
\end{remark}

\begin{theorem}
If $\alpha_1,\alpha_2\leq1-\frac{b}{H-L+b}$, then the unique equilibrium of the Traveler's Dilemma is $$\left(\alpha_1L+(1-\alpha_1)H,\alpha_2L+(1-\alpha_2)H\right).$$
\end{theorem}

\begin{proof}\label{th:td}
The proof is very similar to the one of the previous theorem. We define $$\Omega'=\{(s,(o,o)) : s_i\in\{L,H\}, \text{ for all } i=1,2\}.$$
Imposing that $\omega=((H,L),(o,o))$ be rational for Player 1 leads to
the condition $\alpha_1\leq1-\frac{b}{H-L+b}$. Analogously, imposing
that $\omega=((L,H),(o,o))$ be rational for Player 2 leads to the
condition $\alpha_2\leq1-\frac{b}{H-L+b}$. As in the previous theorem,
rationality of the other states in $\Omega'$ does not add any more
condition. 
\end{proof}

\begin{remark}\label{rem:td}
{\rm Theorem \ref{th:td} is also consistent with the experimental finding that cooperation is negatively related to the bonus/penalty $b$. As $b$ increases, $1-\frac{b}{H-L+b}$ decreases to 0 and cooperation gets then more and more difficult.}
\end{remark}

\begin{theorem}\label{th:pgg}
If $\alpha_i\leq1-\frac{1}{\rho(N-1)}$, for all $i\in P$, then the unique equilibrium of the N-person Public Goods game with marginal return $\rho$ is the profile of strategies $(\sigma_1,\ldots,\sigma_N)$, where $$\sigma_i=\alpha_iN+(1-\alpha_i)C,$$ and C stands for `contribute everything' and N stands for `contribute nothing'.
\end{theorem}

\begin{proof}
As in the previous proofs, let $$\Omega'=\{(s,(o,o)) : s_i\in\{C,N\}, \text{ for all }i=1,\ldots,n\}.$$ Fix $\omega=((C,N,\ldots,N),(o,\ldots,o))$. We show that Player 1 is rational at $\omega$. Denote $s_1=C$ and recall that $R_1(s_1)$ is the set of strategy profiles where Player $1$ plays $s_1=C$ and all other players either play C or play N. Define the following equivalence relation $\sim$ on $R_1(s_1)$: two profiles of strategies $s'=(s_1,s_{-1}')$ and $s''=(s_1,s_{-1}'')$ are equivalent if the number of players who do not contribute in $s'$ is equal to the number of players who do not contribute in $s''$. Let $k\in\{0,\ldots,N-1\}$ be such a number, which uniquely identify an equivalence class. Let $s'(k)$ be a representative of the class associated with $k\in\{0,\ldots,N-1\}$. 

We start observing that, if $\textbf{s}(\omega')\sim\textbf{s}(\omega'')\sim s'(k)$, then we have
\begin{align}\label{eq:equality}
\beta_1(\omega,\textbf{s}(\omega'))=\beta_1(\omega,\textbf{s}(\omega''))=\alpha_1^k(1-\alpha_1)^{N-1-k}\end{align}
\begin{align}
u_1(\textbf{s}_1(\omega),\textbf{s}_{-1}(\omega'))=u_1(\textbf{s}_1(\omega),\textbf{s}_{-1}(\omega''))=\rho(N-1-k).
\end{align}

Consequently, the partition $\sim$ allows us to write the sum defining $g_1(\omega)$ in a much easier way. We have
\begin{align*}
g_1(\omega)\\
&=\sum_{k=0}^{N-1}\binom{N-1}{k}\beta_1(\omega,s'(k))u_1(s'(k),(o,\ldots,o))\\
&=\sum_{k=0}^{N-1}\binom{N-1}{k}\alpha_1^k(1-\alpha_1)^{N-1-k}\rho(N-1-k)\\
&=\rho(1-\alpha_1)(N-1).
\end{align*}

On the other hand, similarly to the other proofs, we have
$g_1^c(\omega,N)=u_1(N,\ldots,N)=1$. So, rationality of $\omega$ gives
the condition $\alpha_1\leq1-\frac{1}{\rho(N-1)}$. Analogously to the
other proofs, rationality of the other states in $\Omega'$ does not add
any more condition on $\alpha_1$. The analogue conditions on the other
$\alpha$'s are obtained by reasoning with respect to the respective
players.  
\end{proof}

\begin{remark}\label{rem:pgg}
{\rm Also the predictions of Theorem \ref{th:pgg} are consistent with the experimental findings described above. The function $1-\frac{1}{\rho(N-1)}$ increases as either $\rho$ or $n$ increases, favoring cooperation. }
\end{remark}

\begin{theorem}\label{th:bc}
If $\alpha_i\leq1-\left(\frac{L}{H}\right)^{1/(N-1)}$, for all
$i=1,\ldots,n$, then the unique equilibrium of the Bertrand
Competition is the profile of strategies $(\sigma_1,\ldots,\sigma_N)$, where $$\sigma_i=\alpha_iL+(1-\alpha_i)H.$$
\end{theorem}

\begin{proof}
The proof of this results is similar to the proof of Theorem \ref{th:pgg}. The only difference is that now
$$
u_1(\omega,s_{-1}(s'(k))=\left\{
  \begin{array}{lll}
    0, & \hbox{if $k\geq1$} \\
    \frac{H}{n}, & \hbox{if $k=0$},\\
  \end{array}
\right. 
$$
Consequently, one has
\begin{align*}
g_1(\omega)=(1-\alpha_1)^{N-1}\frac{H}{n}.
\end{align*}
On the other hand, one has
$g_1^c(\omega,L)=u_1(L,\ldots,L)=\frac{L}{n}$. So, rationality of
$(H,L,\ldots,L)$ leads to the condition in stated on the theorem. 
\end{proof}

\begin{remark}\label{rem:bc}
{\rm Notice that since $L<H$, then
  $\left(\frac{L}{H}\right)^{1/(N-1)}$ is increasing in $n$ and
  converges to $1$ as $n$ goes off to infinity. Consequently, Theorem
  \ref{th:bc} predicts that cooperation becomes more and more
  difficult as the number of agents increases, consistently with the
  experimental findings. Additionally, observe that Theorem
  \ref{th:bc} also predicts that cooperation is more difficult as the
  price floor $L$ increases. We are aware of only one experimental
  study on the Bertrand Competition with varying price floor and,
  indeed, the authors found that cooperation is more difficult as $L$
  increases \cite{D07}.} 
\end{remark}
}
}

As we suggested in the introduction, we hope to use translucency to
explain cooperation in social dilemmas
even when players cannot see each other.  We expect that people get so
used to assuming some degree of transparency in their everyday
interactions, which are typically face-to-face, that they bring
these strategies and beliefs in the lab setting, even though they are
arguably 
inappropriate. 

To do this, we have to make
assumptions about an agent's beliefs.  
Say that an agent $i$ has
\emph{type $(\alpha,\beta,C)$} if $i$ intends to cooperate (the
parameter $C$ stands for \emph{cooperate}) and 
believes that (a) if he deviates from that, then each other 
agent will independently realize this with probability $\alpha$; 
(b) if an agent $j$ realizes that $i$ is not going to cooperate,
then $j$ will defect;
and (c) all other players will either cooperate or defect, and they
will cooperate with probability $\beta$.

The standard assumption, of course, is that $\alpha = 0$.  Our results
are only of interest if $\alpha > 0$. The assumption that $i$ believes
that agent $j$ will defect if she realizes that $i$ is going to
deviate from cooperation seems reasonable; defection is the ``safe''
strategy.  We stress that, for our results, it does not matter what
$j$ actually does.  All that matters are $i$'s beliefs about what $j$
will do.  The assumption that players will either cooperate or defect
is trivially true in Prisoner's Dilemma, but is a highly nontrivial
assumption in the other games we consider.  While cooperation and
defection are arguably the most salient strategies, we do in practice
see players using other strategies. 
For instance, the distribution of strategies in the Public Goods game
is typically tri-modal, concentrated on contributing nothing,
contributing everything, and contributing half \cite{capraro2014heuristics}. 
We made this assumption mainly for technical convenience: it
makes the calculations much easier.  We 
believe that results qualitatively similar to ours will hold under a
much weaker assumption, namely, that a type $(\alpha,\beta,C)$ player
believes that other players will cooperate with probability $\beta$
(without assuming that they will defect with probability $1-\beta$).

Similarly, the assumptions that a social dilemma has a unique Nash
equilibrium and a unique social-welfare maximizing strategy were made
largely for technical reasons.  We can drop these assumptions,
although that would require more complicated assumptions about
players' beliefs.

Our assumptions ensure that
the type of player $i$ determines the distributions $\mu_i^{s_i,s_i'}$.
In a social dilemma with $N$ agents,
the distribution $\mu_i^{s_i,s_i}$ assigns probability
$\beta^r(1-\beta)^{N-1-r}$ to a strategy profile $s_{-i}$ for the
players other than $i$ if exactly $r$ players cooperate in $s_{-i}$
and the remaining $N-1-r$ players defect; it assigns probability 0 to all
other strategy profiles.  The distributions
$\mu_i^{s_i,s_i'}$ for $s_i' \ne s_i$ all have the form 
$\sum_{J \subseteq \{1,\ldots,i-1, i+1,\ldots, N\}} \alpha^{|J|}
(1-\alpha)^{N-1-|J|} \mu_i^{J}$, where $\mu_i^J$ is the distribution
  that assigns probability $\beta^k(1-\beta)^{N-|J|-k}$ to a profile
  where $k \le N-1 - 
  |J|$ players not in $J$ cooperate, and the remaining players (which
  includes all the players in $J$) defect.  Thus, $\mu_i^J$ is the
  distribution that describes what player $i$'s beliefs would
  be if he knew that exactly the players in $J$ had noticed his
  deviation (which happens with probability $\alpha^{|J|}
(1-\alpha)^{N-1-|J|}$). 
In the remainder of this section, when we talk about best response, it
is with respect to these beliefs.


For our purposes, it does not matter where the beliefs $\alpha$ and $\beta$
that make up a player's type come from.  We do not assume, for
example, that other players are 
(translucently)
rational.  For example, $i$ may believe that some players cooperate
because they are altruistic, while others may cooperate because they have
mistaken beliefs.  We can
think of $\beta$ as summarizing $i$'s previous experience of
cooperation when playing social dilemmas.  Here we are interested in
the impact of the parameters of the game on the reasonableness of
cooperation, given a player's type.

The following four propositions analyze the four social dilemmas in
turn; the proofs can be found in \shortv{the full paper.}
\fullv{Appendix~\ref{sec:proofs}.}   
%
We start with Prisoner's Dilemma.  Recall that $b$ is the
benefit of cooperation and $c$ is its cost.

\begin{proposition}\label{prop:PD}
In Prisoner's Dilemma,  it is translucently rational for a player of 
type $(\alpha,\beta,C)$ to cooperate
if and only if 
$\alpha \beta b \ge c$.
\wbox
\end{proposition}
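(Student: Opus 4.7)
The plan is to apply Definition \ref{brdefinitionnew} directly, comparing the expected utility of the intended strategy $s_i = C$ (computed under $\mu_i^{C,C}$) with that of the only pure alternative $s_i' = D$ (computed under $\mu_i^{C,D}$). Since $N = 2$ in Prisoner's Dilemma, the general description of $\mu_i^{C,D}$ collapses to just two cases, indexed by whether the opponent notices the deviation (probability $\alpha$) or not (probability $1-\alpha$), which makes the calculation essentially mechanical.

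First I would compute the expected utility of sticking with cooperation. Under $\mu_i^{C,C}$ the opponent cooperates with probability $\beta$, yielding $b-c$, and defects with probability $1-\beta$, yielding $-c$, so
\[
  \EU(C) \;=\; \beta(b-c) + (1-\beta)(-c) \;=\; \beta b - c.
\]

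Next I would compute the expected utility of deviating to $D$. Under $\mu_i^{C,D}$, with probability $\alpha$ the opponent notices the switch and defects, giving $i$ a payoff of $0$; with probability $1-\alpha$ she does not notice and her play is distributed as under $\mu_i^{C,C}$, so $i$ obtains $b$ with probability $\beta$ and $0$ with probability $1-\beta$. Hence
\[
  \EU(D) \;=\; \alpha \cdot 0 + (1-\alpha)\bigl(\beta b + (1-\beta)\cdot 0\bigr) \;=\; (1-\alpha)\beta b.
\]

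By Definition \ref{brdefinitionnew}, cooperation is translucently rational iff $\EU(C) \ge \EU(D)$, that is, $\beta b - c \ge (1-\alpha)\beta b$, which rearranges at once to $\alpha\beta b \ge c$. I do not foresee a real obstacle here: once the beliefs $\mu_i^{C,C}$ and $\mu_i^{C,D}$ are read off from the definition of the type $(\alpha,\beta,C)$, the argument is a one-line calculation. The only subtlety worth noting is that it suffices to consider the single pure deviation $D$; by linearity of expected utility, no mixed deviation can outperform the best pure deviation, so Definition \ref{brdefinitionnew} is satisfied for \emph{all} $s_i' \in \SSigma_i$ exactly when it is satisfied for $s_i' = D$.
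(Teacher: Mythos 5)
Your proposal is correct and follows essentially the same route as the paper's own proof: both compute the expected payoff of cooperating as $\beta b - c$, the expected payoff of deviating to $D$ as $(1-\alpha)\beta b$ (since the deviation shifts the believed cooperation probability from $\beta$ to $(1-\alpha)\beta$), and rearrange the best-response inequality to $\alpha\beta b \ge c$. The closing remark about pure versus mixed deviations is harmless but unnecessary, since $D$ is the only alternative strategy in $\SSigma_i$.
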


\commentout{
\begin{proof}
If player $i$ has type $(\alpha,\beta,C)$ and cooperates in Prisoner's
Dilemma, then his expected
payoff is $\beta(b-c) - (1-\beta)c$, since player $i$ believes that
$j \ne i$
will cooperate with 
probability $\beta$.  However, if $i$ deviates from his intended
strategy of cooperation, then $j$ will catch him with probability
$\alpha$ and also defect.  Thus, if $i$ deviates, then $i$'s belief
that $j$ will cooperate goes down from $\beta$ to
$(1-\alpha)\beta$.  
(We remark that this is the case in all social dilemmas; this fact
will be used in all our arguments.)
This means that $i$'s expected payoff if he
deviates by defecting is
$(1-\alpha)\beta b$.  So cooperating
is a best response if $\beta(b-c) - (1-\beta)c \ge (1-\alpha)\beta b$.   A
little algebra 
shows that this reduces to $\alpha \beta b \ge c$. 
\end{proof}
}
As we would expect, if $\alpha = 0$, then cooperation is not
a best response in Prisoner's Dilemma; this is just the standard
argument that defection 
dominates cooperation.   But if $\alpha > 0$, then cooperation can be rational.
Moreover, if we fix $\alpha$,  
the greater the benefit of cooperation and the smaller the cost, then
the smaller the value of $\beta$ that still allows cooperation to be a
best response.

\commentout{
\begin{proposition} If $\Gamma$ is prisoner's dilemma, and 
$\sigma = (p_1C + (1-p_1)D, p_2C + (1-p_2)D)$.  Then
$\sigma$ is a translucent equilibrium in $M^{\G}(\sigma,\alpha)$ if 
$\min(\alpha_1 b p_1, \alpha_2, b p_2) \ge c$.
\end{proposition}

\begin{proof} Take $\Omega'$ to consist of all states $(s,v)$ such
that $s$ is in the support of $\sigma$.  IT is immediate that TE1,
TE2, and TE3 hold, so we just have to check TE4.  Clearly if a player
defects at state $\omega$, this is rational.  So we just have to show
that cooperation is rational for both player 1 and player 2.
\end{proof}
}

We next consider Traveler's Dilemma.  Recall that $b$ is the
reward/punishment, $H$ is the high payoff, and $L$ is the low payoff,



\begin{proposition}\label{prop:TD}
In Traveler's Dilemma, 
it is translucently rational for a player of $(\alpha,\beta,C)$ to cooperate
if and only if
\fullv{$b\le}
\shortv{$$b \le}
\left\{
\begin{array}{ll}
\frac{(H-L)\beta}{1-\alpha\beta} &\mbox{if $\alpha\ge\frac12$}\\
\min\left(\frac{(H-L)\beta}{1-\alpha\beta},\frac{H-L-1}{1-2\alpha}\right)
&\mbox{if $\alpha<\frac{1}{2}$.} 
\end{array}
\shortv{\right.$$\\}\fullv{\right.$\\}
\wbox
\end{proposition}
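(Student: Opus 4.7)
The plan is to compare, under the type-$(\alpha,\beta,C)$ beliefs, the expected utility of cooperating (playing $H$) against the expected utility of every possible deviation $s_i' \in \{L, L+1, \ldots, H-1\}$, and to identify the tightest resulting constraint on $b$. The type assumptions pin down the two belief distributions: $\mu_i^{H,H}$ puts probability $\beta$ on the opponent playing $H$ and $1-\beta$ on the opponent playing $L$; while $\mu_i^{H,s_i'}$ for $s_i' \ne H$ puts probability $(1-\alpha)\beta$ on the opponent playing $H$ and $1-(1-\alpha)\beta$ on the opponent playing $L$ (since, conditional on a deviation, the opponent detects with probability $\alpha$ and then switches to the defect strategy $L$).

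First I would compute the payoff from cooperation, $\mathrm{EU}(H) = \beta H + (1-\beta)(L-b)$. For any pure deviation $k$ with $L < k < H$, the Traveler's Dilemma payoff rules give $\mathrm{EU}(k) = (1-(1-\alpha)\beta)(L-b) + (1-\alpha)\beta(k+b)$, which is strictly increasing in $k$ whenever $(1-\alpha)\beta > 0$; hence among interior deviations only $k = H-1$ can bind. The deviation to $L$ itself must be handled separately because of the tie-breaking rule: $\mathrm{EU}(L) = L + (1-\alpha)\beta b$. So cooperating is translucently rational if and only if $\mathrm{EU}(H) \ge \mathrm{EU}(L)$ and $\mathrm{EU}(H) \ge \mathrm{EU}(H-1)$.

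A short algebraic rearrangement of $\mathrm{EU}(H) \ge \mathrm{EU}(L)$ yields $b(1-\alpha\beta) \le \beta(H-L)$, i.e., $b \le \frac{\beta(H-L)}{1-\alpha\beta}$. Similarly, $\mathrm{EU}(H) \ge \mathrm{EU}(H-1)$ rearranges to a linear inequality in $b$ whose coefficient on $b$ is $(1-2\alpha)$; when $\alpha \ge \tfrac12$ that coefficient is non-positive while the other side is positive, so the inequality is automatic; and when $\alpha < \tfrac12$ it yields the second branch of the minimum. Combining the two regimes produces the piecewise bound stated in the proposition.

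The main obstacle will be the case split at $\alpha = \tfrac12$, which is exactly the threshold at which the expected cost of being detected when undercutting by one step cancels the gain from undercutting; once $\alpha$ is at least this large, the undercut-by-one deviation can never be strictly profitable regardless of $b$, and only the undercut-all-the-way-to-$L$ deviation binds. Some care is also required to verify that the monotonicity argument really does rule out all deviations other than $L$ and $H-1$: this relies on $(1-\alpha)\beta > 0$, which we may assume without loss (otherwise every nontrivial deviation yields the same expected opponent response and cooperation is trivially at least as good).
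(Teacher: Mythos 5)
Your setup is exactly the paper's: the same belief distributions $\mu_i^{H,H}$ and $\mu_i^{H,s_i'}$, the same expected payoffs $\mathit{EU}(H)=\beta H+(1-\beta)(L-b)$, $\mathit{EU}(L)=L+(1-\alpha)\beta b$, and $\mathit{EU}(k)=(1-\alpha)\beta(k+b)+(1-(1-\alpha)\beta)(L-b)$ for $L<k<H$, and the same monotonicity argument reducing everything to the two deviations $H-1$ and $L$. The first branch does follow from $\mathit{EU}(H)\ge \mathit{EU}(L)$ as you say. The gap is the step you wave through at the end: $\mathit{EU}(H)\ge \mathit{EU}(H-1)$ does \emph{not} rearrange to $b\le\frac{H-L-1}{1-2\alpha}$. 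Carrying out the algebra, $\beta H+(1-\beta)(L-b)\ge(1-\alpha)\beta(H-1+b)+(1-(1-\alpha)\beta)(L-b)$ reduces, after dividing by $\beta>0$, to $\alpha(H-L)+(1-\alpha)\ge b(1-2\alpha)$, i.e., for $\alpha<\tfrac12$, to $b\le\frac{\alpha(H-L-1)+1}{1-2\alpha}$. The discrepancy is visible already at $\alpha=0$, $\beta=1$: the displayed bound would permit any $b\le H-L-1$, yet with no translucency and an opponent playing $H$ for sure, undercutting to $H-1$ pays $H-1+b>H$ as soon as $b>1$; the correct threshold there is $b\le 1$, which is what $\frac{\alpha(H-L-1)+1}{1-2\alpha}$ gives at $\alpha=0$ and what the displayed formula does not.

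To be fair, the paper's own proof makes the identical leap --- it asserts that ``straightforward algebra'' turns the max-inequality into the displayed condition --- so you have faithfully reproduced its argument, hand-wave included, and your case split at $\alpha=\tfrac12$ (the coefficient of $b$ changes sign there, so the $H-1$ constraint becomes vacuous) is correct. But in a blind reconstruction the entire content of the proposition lives in that last rearrangement, and actually performing it shows that the second branch of the stated bound is not what the computation yields. You need either to exhibit the algebra that you believe produces $\frac{H-L-1}{1-2\alpha}$ (I do not think it exists given the payoffs you and the paper both wrote down), or to record the corrected condition $b\le\min\bigl(\frac{(H-L)\beta}{1-\alpha\beta},\frac{\alpha(H-L-1)+1}{1-2\alpha}\bigr)$ for $\alpha<\tfrac12$ and note the conflict with the statement. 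As written, ``it yields the second branch of the minimum'' is an unproved claim, and it is false.
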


\commentout{
\begin{proof}
If player $i$ has type $(\alpha,\beta,C)$ and cooperates in Traveler's
Dilemma, then his expected
payoff is $\beta H + (1-\beta)(L-b)$, since player $i$ believes that
$j \ne i$
will cooperate with 
probability $\beta$.  If $i$ deviates and plays $x\ne H$, then $j$ will catch him with probability
$\alpha$ and play $L$.  Recall from the proof of
Proposition~\ref{prop:PD} that, if $i$ deviates, $i$'s belief that $j$
cooperates is $(1-\alpha)\beta$. 
This means that $i$'s expected payoff if he
deviates to $x < H$ is
$(1-\alpha)\beta (x+b) + (1 - \beta + \alpha\beta)(L-b)$ if $x > L$, and 
$(1-\alpha)\beta (L+b) + (1- \beta + \alpha \beta)L = L +
(\beta -\alpha\beta)  b$ if $x = L$.
It is easy to see that $i$ maximizes his expected payoff either if $x =
H-1$ or $x=L$.  Thus, cooperation is a best response if 
$\beta H + (1-\beta)(L-b) \ge \max((1-\alpha)\beta(H +b -1) + (1-
\beta + \alpha \beta)(L-b), L + (\beta -\alpha\beta)  b)$. 
Again, straightforward algebra shows that this condition is equivalent
to the one stated, as desired.
(It is easy to check that if $\alpha \ge 1/2$, then the condition 
$\beta H + (1-\beta)(L-b) \ge (1-\alpha)\beta(H +b -1) + (1- \beta +
\alpha \beta)(L-b) $ is guaranteed to hold, which is why we get the
two cases depending on whether $\alpha \ge 1/2$.)
\end{proof}
}

Proposition~\ref{prop:TD} shows that as $b$, the punishment/reward,
increases, a player must have greater belief that his opponent is
cooperative and/or a greater belief that the opponent will learn about
his deviation and/or a
greater difference between the high and low payoffs in order to make
cooperation a best response.  (The fact that 
increasing $\beta$ increases $\frac{(H-L)\beta}{1-\alpha\beta}$
follows from straightforward calculus.)

\commentout{
recall the identity

\begin{align}\label{eq:useful identity}
\sum_{k=0}^{n}\binom{n}{k}p^k(1-p)^{N-k}k=pn
\end{align} 
}

We next consider the Public Goods game.  Recall the $\rho$ is 
the marginal return of cooperating.

\begin{proposition}\label{pro:PCG}
In the Public Goods game with $N$ players, it is translucently rational
for a player of type 
$(\alpha,\beta,C)$ to cooperate
if and only if 
$\alpha\beta\rho(N-1) \ge 1 - \rho$.
\wbox
\end{proposition}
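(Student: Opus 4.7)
The plan is to compute the translucently rational player's expected payoff from cooperation and compare it to the expected payoff of the best deviation, then simplify.

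First, I would fix a type $(\alpha,\beta,C)$ player $i$. Using the structure of $\mu_i^{s_i,s_i}$ described just before Proposition~\ref{prop:PD}, the expected contribution of each other player under $\mu_i^{s_i,s_i}$ is $\beta$, so $i$'s expected payoff from cooperating (contributing $1$) is
\[
1 - 1 + \rho\bigl(1 + (N-1)\beta\bigr) \;=\; \rho\bigl(1 + (N-1)\beta\bigr),
\]
by linearity of $u_i$ in the $x_j$'s.

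Next, I would analyze deviations. For any deviation $x\neq 1$, the belief $\mu_i^{s_i,s_i'}$ has the form described in the paragraph preceding Proposition~\ref{prop:PD}: each other player independently defects with probability $\alpha$ (the detection probability) and otherwise cooperates with probability $\beta$. So the expected contribution of each other player is $(1-\alpha)\beta$, giving expected payoff
\[
1 - x + \rho\bigl(x + (N-1)(1-\alpha)\beta\bigr) \;=\; 1 - (1-\rho)x + \rho(N-1)(1-\alpha)\beta.
\]
Since $\rho<1$, this expression is strictly decreasing in $x$ on $\{0,1,\ldots\}$, so the best deviation is $x=0$, with expected payoff $1 + \rho(N-1)(1-\alpha)\beta$. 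This ``best deviation is full defection'' step is the only part that requires a small argument, and it is a routine consequence of the marginal return condition $\rho<1$; I expect it to be the main (mild) obstacle, since one must verify it uniformly over all other admissible contribution levels.

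Finally, cooperation is translucently rational iff the cooperate payoff weakly exceeds the best deviation payoff:
\[
\rho\bigl(1+(N-1)\beta\bigr) \;\ge\; 1 + \rho(N-1)(1-\alpha)\beta.
\]
Subtracting $\rho(N-1)\beta$ from both sides and rearranging yields $\alpha\beta\rho(N-1)\ge 1-\rho$, as desired.
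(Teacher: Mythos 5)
Your proposal is correct and follows essentially the same route as the paper's own proof: compute the cooperation payoff $\rho(1+(N-1)\beta)$ by linearity, note that under the post-deviation beliefs the expected number of other cooperators drops to $(1-\alpha)\beta(N-1)$, observe that $\rho<1$ makes the deviation payoff decreasing in the contribution $x$ so the optimal deviation is $x=0$, and rearrange the resulting inequality. No substantive differences.
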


\commentout{
\begin{proof}
Suppose player $i$, of type $(\alpha,\beta,C)$, cooperates.  Since he expects a player to cooperate
with probability $\beta$, the expected number of cooperators among the
other players is $\beta(N-1)$. Since he himself will cooperate, the
total expected number of cooperators is $1+\beta(N-1)$.  
Since $i$'s payoff is $\rho m$ if $m$ players
including him cooperate, and thus is linear in the number of
cooperators, his  expected payoff is exactly his payoff if the
expected number of players cooperate.  
Since his expected payoff with $1 + \beta(N-1)$ cooperators is 
$\rho(1 + \beta(N-1))$, this is his expected payoff if he cooperates.   

On the other hand, if $i$ deviates by contributing $x < 1$, his
expected payoff if $m$ other players cooperate is $(1-x) + \rho
(m+x)$.  
Again, if $i$ deviates, his expected belief that $j$ will cooperate is 
$(1-\alpha)\beta$.
Thus, the expected
number of cooperators is $(1-\alpha)\beta(N-1)$, and
his expected payoff is $1 - x + \rho((1-\alpha)\beta(N-1) + x)$.  
Since $\rho < 1$, he gets the highest expected payoff by defecting
(i.e., taking $x=0$).  

Thus, cooperation is a best response if $\rho(1 + \beta(N-1)) \ge 1 +
\rho(1-\alpha)\beta(N-1)$.    Simple algebra shows that this condition
holds iff $\alpha\beta\rho(N-1) \ge 1-\rho$.
\end{proof}
}

Proposition~\ref{pro:PCG} shows that if $\rho=1$, then cooperation is certainly
a best response (you always get out at least as much as you contribute).  For
fixed $\alpha$ and $\beta$, there is guaranteed to be an $N_0$ such
that cooperation is a best response for all $N \ge N_0$; moreover, for
fixed $\alpha$, as $N$ gets larger, smaller and smaller $\beta$s are
needed for cooperation to be a best response.

Finally we consider the Bertrand competition. Recall that $H$ is the reservation value and $L$ is the price floor.

\begin{proposition}\label{pro:Bertrand}
In Bertrand Competition, 
it is translucently rational for a player of type $(\alpha,\beta,C)$
to cooperate iff $\beta^{N-1} \ge
\max(\gamma^{N-1}N(H-1)/H,f(\gamma,N)LN/H)$, where  
$\gamma = (1-\alpha)\beta$ and 
$f(\gamma,N) = \sum_{k=0}^{N-1}
\binom{N-1}{k}(1-\gamma)^k\gamma^{N-k-1}/(k+1)$. 
\wbox
\end{proposition}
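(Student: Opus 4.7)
The plan follows the template of Propositions~\ref{prop:PD}--\ref{pro:PCG}: compute the translucently expected utility of the intended action (cooperation, i.e.\ bidding $H$), compute the expected utility of every possible deviation $x\in\{L,L{+}1,\ldots,H{-}1\}$, and then compare. Throughout I would use the standing consequence of the type $(\alpha,\beta,C)$ assumption: if $i$ sticks with $H$, each other player independently plays $H$ with probability $\beta$ and $L$ with probability $1-\beta$; if $i$ deviates, each other player independently plays $H$ with probability $\gamma=(1-\alpha)\beta$ and $L$ with probability $1-\gamma$ (having either detected the deviation and defected, or failed to detect and then played their planned action).

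For the expected payoff of bidding $H$: since all other players bid only $L$ or $H$, player $i$ earns a positive payoff only when every other bid is also $H$, in which case all $N$ firms tie and $i$ earns $H/N$, so cooperating yields expected utility $\beta^{N-1}H/N$. For a deviation $x$ with $L<x<H$, player $i$ is the unique low bidder precisely when every other player bids $H$, and is otherwise strictly beaten by some bid of $L<x$; hence the expected utility is $\gamma^{N-1}x$, which is monotone increasing in $x$ and maximized at $x=H-1$, giving value $\gamma^{N-1}(H-1)$. For the extreme deviation $x=L$, player $i$ always ties for lowest: if exactly $k$ of the $N-1$ others bid $L$ (probability $\binom{N-1}{k}(1-\gamma)^k\gamma^{N-1-k}$), then $i$ receives $L/(k+1)$; summing over $k$ gives expected utility $L\cdot f(\gamma,N)$.

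Cooperation is therefore translucently rational iff $\beta^{N-1}H/N$ weakly dominates both $\gamma^{N-1}(H-1)$ and $Lf(\gamma,N)$. Multiplying each inequality by $N/H$ yields the two bounds in the $\max$ of the proposition statement. The main conceptual subtlety is the tie-breaking at $x=L$, which is what forces the binomial sum $f(\gamma,N)$ and explains why the condition splits into two cases; deviations in the open interval $(L,H)$ by contrast admit the one-line analysis above because the tie event has probability zero for them, so no further comparison across intermediate $x$ is needed once monotonicity in $x$ has been noted.
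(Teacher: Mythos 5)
Your proposal is correct and follows essentially the same route as the paper's proof: the cooperation payoff $\beta^{N-1}H/N$, the reduction of interior deviations to $x=H-1$ with value $\gamma^{N-1}(H-1)$ via monotonicity, and the separate tie-breaking computation $Lf(\gamma,N)$ for the deviation to $L$ all match the paper's argument. No gaps.
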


\commentout{
\begin{proof}
\commentout{
One has 
$$
EU_1((H,L,\ldots,L),(1,0,\ldots,0))=\frac{p_{-1}^{N-1}H}{n}
$$
Similar to the previous proposition, we then find

$$
EU_1(\omega,L)=

$$

$$
\sum_{r=0}^{N-1}\sum_{t=0}^{r}\sum_{w=0}^{N-1-r}\binom{N-1}{r}\binom{r}{t}\binom{N-1-r}{w}p_{-1}^{t+w}(1-p_{-1})^{N-1-t-w}\alpha_1^{N-1-r}(1-\alpha_1)^{r}\frac{L}{N-t}
$$

This time, this expression is not easily simplificable (using wolfram). But we can complete the proof in the following way: fixing $t=N-1$ and $w=0$ we obtain

$$
EU_1(\omega,D)\geq p_{-1}^{N-1}L\sum_{r=0}^{N-1}\binom{N-1}{r}\alpha_1^{N-1-r}(1-\alpha_1)^r=p_{-1}^{N-1}L
$$

Consequently, for $N$ large enough , we certainly have
$EU_1(\omega,D)>EU_1(\omega)$. 
}
Clearly, if player $i$ cooperates, then his expected payoff is
$\beta^{N-1}H/N$, since he gets $H/N$ if everyone else cooperates
(which happens with probability $\beta^{N-1}$), and otherwise gets
$0$.

Let $\gamma = (1-\alpha)\beta$.  Again, this is the probability that
$i$ ascribes to another player playing $H$ if he deviates.
If $i$ deviates, then it is easy to see (given his beliefs) that the
optimal choices for deviation are $H-1$ and $L$.  In the former case, 
$i$'s expected payoff is $\gamma^{N-1}(H-1)$.  
In the latter case, 
$i$'s expected payoff is $\sum_{k=0}^{N-1}\binom{N-1}{k} (1-\gamma)^k\gamma^{N-k-1} L/(k+1)$:
with probability  $(1-\gamma)^k\gamma^{N-k-1}$, exactly $k$ other
players will play $L$, and $i$'s payoff will be $L/(k+1)$.  Moreover, each possible subset of $k$ defectors, has to be count $\binom{N-1}{k}$ times.
Let $f(\gamma,N) = \sum_{k=0}^{N-1}\binom{N-1}{k} (1-\gamma)^k\gamma^{N-k-1}/(k+1)$.
Note that, as the notation suggests, this expression depends only on
$\gamma$ and $N$ (and not any of the other parameters of the game).
Thus, $i$'s expected payoff in this case is $f(\gamma,N)L$, so 
cooperation is a best response iff 
$\beta^{N-1}H/N \ge \max(\gamma^{N-1}(H-1), f(\gamma,N)L)$ 
%
or, equivalently, $\beta^{N-1} \ge \max(\gamma^{N-1}(H-1)N/H,
f(\gamma,N)LN/H)$.%
\footnote{While it seems
difficult to find a closed-form expression for $f(\gamma,N)$,
this does not matter for our purposes.
Note that the expected value of $L/(k+1)$ cannot be computed by
plugging in the expected value of $k$, in the spirit of our earlier
calculations, since $L/(k+1)$ is not linear in $k$.} 
\end{proof}
}

Note that $f(\gamma,N) = \sum_{k=0}^{N-1}\binom{N-1}{k} (1-\gamma)^k\gamma^{N-k-1}/(k+1) \ge 
\sum_{k=0}^{N-1}\binom{N-1}{k} (1-\gamma)^k\gamma^{N-k}/N = 1/N$, so 
Proposition~\ref{pro:Bertrand} shows cooperation is irrational if
$\beta^{N-1} < L/H$.
Thus, while cooperation may be achieved for reasonable values of $\alpha$ and 
$\beta$ if $N$ is small, a
player must be more and more certain of cooperation in order to
cooperate in Bertrand Competition as the number of players
increases.  Indeed, for a fixed type $(\alpha,\beta,C)$, there exists
$N_0$ such that cooperation is not a best response for all $N \ge N_0$.
Moreover, if we fix the number $N$ of players, more values of $\alpha$
and $\beta$ allow cooperation as $L/H$ gets smaller.  In particular,
if we fix $H$ and raise the floor $L$, fewer values of $\alpha$ and
$\beta$ allow cooperation.

While Propositions~\ref{prop:PD}--\ref{pro:Bertrand} are suggestive, 
we need to make extra assumptions to use these propositions to make
predictions.  A simple assumption that suffices is that there are a
substantial number of translucently rational players whose types have the form
$(\alpha,\beta,C)$.  Formally, assume that for each pair  $(u,v)$ and
$(u',v')$ of open intervals in $[0,1]$, there is a positive
probability of finding 
someone of type $(\alpha,\beta,C)$ with $\alpha \in (u,v)$ and $\beta
\in (u',v')$.  With this assumption, it is easy to see that all
the regularities discussed in Section~\ref{se:social dilemmas model}
hold.

\section{Comparison to other approaches}\label{sec:comparison}


Here we show that approaches (that we are aware of) other than
that of Charness and Rabin and possibly that of Bolton and Ockenfels are
not able to obtain 
all the regularities that we mentioned in Section~\ref{se:social
  dilemmas model}.  
We consider a number of approaches in turn.
\begin{itemize}
\item The Fehr and Schmidt \citeyear{Fe-Sc} \emph{inequity-aversion model}
assumes that subjects play a Nash equilibrium of a modified game, in
which players do not only care about their monetary payoff, but also
they care about equity. Specifically, player $i$'s utility when
strategy $s$ is played is assumed to be 
\newcommand{\alphaEF}{a^{FS}}
\newcommand{\betaEF}{b^{FS}}
$U_i(s)=u_i(s)-\frac{\alphaEF_i}{N-1}\sum_{j\neq
  i}\max(u_j(s)-u_i(s),0)-\frac{\betaEF_i}{N-1}\sum_{j\neq
  i}\max(u_i(s)-u_j(s),0)$, where $u_i(s)$ is the material payoff of
player $i$, and $0\leq\betaEF_i\leq\alphaEF_i$ are individual parameters,
where $\alphaEF_i$
represents the extent to which player $i$ is averse to inequity in
favor of others, and $\betaEF_i$ represents his
aversion to inequity in his favor.
\commentout{
It is easy to check that if the parameters $\alphaEF_i$ and $\beta_i$
are such that $U_i(C,C,\ldots,C)\geq U_i(C,\ldots,C,D,C,\ldots,C)$ for
some set of $N$ players, then this inequality holds for all players.
It follows that, if we fix a pool of $n$
players and we randomly extract two samples of $N_1$ and $N_2$
subjects and have them play the PGG, then mutual cooperation is an
equilibrium in the first group if and only if it is in the second
group.  Consequently, Fehr and Schmidt's model does not predict an 
effect of the group size on cooperation in the Public Goods game. 
Consider the 2-player Bertrand Competition.  The payoff for both
players of $(k,k)$ is $k/2$.  If $k > L$ and player $i$ deviates to
$k-1$, then his 
payoff is $(1-\betaEF_i)(k-1)$.  Thus, $(k,k)$ is an equilibrium in
the Fehr-Schmidt model iff $k=L$ or  $k/2 \ge \max_{i=1,2}(1-\betaEF_i)(k-1)$.  
Easy algebra shows that the latter condition holds iff 
$(1/2 - \betaEF_i)k \le  1-\betaEF_i$ for $i = 1,2$.  Thus, if
$\betaEF_i \ge 1/2$ for $i=1,2$, then $(k,k)$ is an equilibrium for
all $k$.  If $\betaEF_i < 1/2$ for some $i$ then we may get an
equilibrium for small values of $k$, but the equilibrium for $k > L$
will disappear once the floor is raised sufficiently high.)}
Consider the Public Goods game with $N$ players. The strategy profile
$(x,\ldots,x)$, where all players contribute $x$ gives player $i$ a
utility of $(1-x) + \rho Nx$.  If $x > 0$ and player $i$ contributes
$x' < x$, then 
his payoff is $(1-x') + \rho((N-1)x + x') - \betaEF_i \rho (x-x')$.
Thus, $(x,\ldots, x)$ is an equilibrium if $\betaEF_i \rho (x-x') \ge
(1-\rho)(x-x')$, that is, if $\betaEF_i \ge (1-\rho)/\rho$.  Thus,
if $\betaEF_i \ge (1-\rho)/\rho$ for all players $i$, then $(x,\ldots,
x)$ is an equilibrium 
for all choices of $x$ and all values of $N$. While there may be other
pure and mixed strategy equilibria, it is not
hard to show that if $\betaEF_i < (1-\rho)/\rho$, then player $i$ will
play 0 in every equilibrium (i.e., not contribute anything). 
As a consequence, assuming, as in our model, that players believe that
there is a probability $\beta$ that other agents will cooperate and
that the other agents either cooperate or defect, Fehr and Schmidt
\citeyear{Fe-Sc} model does not make any clear prediction of a 
group-size effect on cooperation in the public goods game.

\item  McKelvey and Palfrey's \citeyear{MK-Pa95} \emph{quantal response
  equilibrium (QRE)} is defined as follows.%
\footnote{We actually define here   a particular instance of QRE
  called the \emph{logit QRE}; $\lambda$ is a free parameter of this model.}
Taking $\sigma_i(\ssigma)$ to be the probability that mixed strategy
$\sigma_i$ assigns to the pure strategy $\ssigma$, given $\lambda>0$, 
a mixed strategy profile $\sigma$ is a QRE if, for each player $i$, 
$\sigma_i(\ssigma) = \frac{e^{\lambda 
      EU_i(\ssigma,\sigma_{-i})}}{\sum_{s_i'\in S_i}e^{\lambda
    EU_i(s_i',\sigma_{-i})}}$. 

To see that QRE does not describe human behaviour well in social
dilemmas, observe that in the Prisoner's Dilemma, for all choices of
parameters $b$ and $c$ in the game, all choices of the
parameter $\lambda$, all players $i$, and all (mixed) strategies $s_{-i}$ of player
$-i$, we have $EU_i(C,s_{-i})<EU_i(D,s_{-i})$. Consequently, whatever
the QRE $\sigma$ is, we must have 
$\sigma_i(C)<\frac{1}{2}<\sigma_i(D)$, that is, QRE predicts
that the degree of cooperation can never be larger than 50\%. However,
experiments show that we can increase the benefit-to-cost ratio so as to reach
arbitrarily large degrees of cooperation (close to 80\% in \cite{capraro2014heuristics}
with $b/c=10$). 

\item \emph{Iterated regret minimization} \cite{HP11b} does not make appropriate
predictions in Prisoner's Dilemma and the Public Goods game, because
it predicts that if there is a dominant strategy then it will be
played, and in these two games, playing the Nash equilibrium is the
unique dominant strategy. 
\item Capraro's \citeyear{Ca} notion of \emph{cooperative equilibrium}, 
while correctly predicting the effects of
the size of the group on cooperation in the Bertrand Competition and
the Public Goods game \cite{barcelo2015group}, fails to predict
the negative effect of the price floor on cooperation in the Bertrand
Competition.  
\item Rong and Halpern's \citeyear{HR1,RH} notion of \emph{cooperative
  equilibrium} (which is different from that of Capraro \citeyear{Ca})
focuses on 2-player games.  However, the definition for
  games with greater than 2 players does not predict the decrease in
  cooperation as $N$ increases in Bertrand Competition, nor the
  increase as $N$ increases in the Public Goods Game.

\item Bolton and Ockenfels' \citeyear{Bo-Oc} \emph{inequity-aversion model}
  assumes that a player $i$ aims at maximizing his or her 
\emph{motivational function} 
$v_i=v_i(x_i,\sigma_i)$, where $x_i$ is $i$'s monetary
  payoff and
  $\sigma_i=\sigma_i(x_1,\sum_{j=1,\ldots,N}x_j)=x_i/\sum_{j=1,\ldots,N}
  x_j$. The motivational function is assumed to be twice
  differentiable, weakly increasing in the first argument, and concave
  in the second argument with a maximum at $\sigma_i=\frac1N$, but
  otherwise is unconstrained.  For each of the social dilemmas that we
  have considered, it is not hard to define a motivational function
  that will obtain the regularities observed.
However, we have not been able to find a single motivational function
that gives the observed regularities for all four social dilemmas that
we have considered.
In any case, just as with the Charness and Rabin model, once we consider
the interaction between social groups and translucency, we can 
distinguish our approach from 
this inequity-aversion model.
Specifically, consider a situation where people are given a choice
between  giving \$1 to an anonymous 
stranger, rather than burning it. In such a situation, inequity
aversion would predict that people would burn the dollar to maintain 
equity (i.e., a situation where no one gets \$1).  However, perhaps
not surprisingly, Capraro et al. \citeyear{capraro2014benevolent} found that
over 90\% people prefer giving away the dollar to burning it.  
Of course, translucency (and a number of other approaches) would have
no difficulty in explaining this phenomenon.
\end{itemize}

The one approach besides ours that we are aware of that obtains all the 
regularities discussed above is that of 
Charness and Rabin \citeyear{Ch-Ra}.  Charness and Rabin, like Fehr
and Schmidt \citeyear{Fe-Sc},
assume that agents play a Nash equilibrium of a modified game, where
players care not only about their personal material payoff, but 
also about the social welfare and the outcome of the least
fortunate person. Specifically, player $i$'s utility is assumed to be
$(1-\alphaCR_i)u_i(s)+\alphaCR_i(\deltaCR_i\min_{j=1,\ldots,N}u_j(s)+(1-\deltaCR_i)\sum_{j=1}^Nu_j(s))$.
Assuming, as in our model, that agents believe that other players
either cooperate or defect and that they cooperate with probability
$\beta$, then it is not hard to see that Charness and Rabin
\citeyear{Ch-Ra} also predict all the regularities that we have been
considering.  

Although it seems difficult to distinguish our model from that of 
Charness and Rabin \citeyear{Ch-Ra} if we consider only social
dilemmas, the models are distinguishable if we look at other settings and take
into account the other reason we mentioned for translucency: that
other people in their social group might discover how they acted.
We can easily capture this in the framework we have been considering
by doubling the number of agents; for each player $i$, we add
another player $i^*$ that represent's $i$'s social network.  Player
$i^*$ can play only two actions: $n$ (for ``did \emph{n}ot observe
player $i$'s action) and $o$ (for ``\emph{o}bserved player $i$'s
action'').%
\footnote{Alternatively, we could take player $i$'s payoff to depend
  on the state of the world, where the state would model whether or
  not player $i$'s action was observed.}  The payoffs of these new
players are irrelevant.  Player $i$'s payoff depends on the action of
player $i^*$, but not on the actions of player $j^*$ for $j^* \ne
i^*$.  Now player $i$ must have a prior probability $\gamma_i$ about
whether his action will 
be observed; in a social dilemma, this probability might increase to
$\gamma_i' \ge \gamma_i$ if he intends to cooperate but instead
deviates and defects.  
It should be clear that, even if $\gamma_i' = \gamma_i$,
if we assume that player $i$'s utilities are significantly lower if
his non-cooperative action is observed, with this framework we would
get qualitatively 
similar results for social dilemmas to the ones that we have already
obtained. 
Again, a player has beliefs about the extent to which he is
transparent, and we can set the payoffs so that the effects of
transparency are the same if a player's social network learns about his
actions and if other players learn about his action.  

The advantage of taking into account what your social group thinks is
that it allows us to apply ideas of translucency even to single-player
games like the 
Dictator Game \cite{KKT86}.  To do so, we need to make assumptions about what
a player's utility would be if his social group knew the extent to which
he shared the pot.  But it should be clear that reasonable assumptions
here would lead to some degree of sharing.  
%
While this would still not distinguish our predictions from those of the
Charness-Rabin model, there is a variant of the Dictator Game
that has recently been considered to show existence of hyper-altruism
in conflict situations \cite{crockett14harm,Capraro14}. In the
simplest version of this game, 
there are only two possible allocations of money: either the agent
gets $x$ and the other 
player gets $-x$, or the 
other player gets $x$ and 
the agent gets $-x$.  
In this game, the Charness-Rabin approach would
predict that the agent will 
either keep $x$ or be indifferent between keeping $x$ and giving it away.
But assuming translucency 
allows for the possibility that some types of agents would
think that their social group would approve of them giving away $x$, 
so if the action were observed by their social group, they would get high 
utility by giving away $x$.  However, recent results by Capraro
\citeyear{Capraro14} show that a significant fraction (1/6) of
people 
are \emph{hyper-altruistic}: they strictly prefer giving away $x$
to keeping it \cite{Capraro14}. 

\commentout{
Consider the 2-player Bertrand Competition.  If $(k,k)$ is played, then
player $i$ gets a utility of $(1-\alphaCR_i)k/2 + 
 \alphaCR_i (\deltaCR_i k/2 + (1-\deltaCR_i) k)$; if $k > L$ and 
player $i$ deviates to $k-1$, then his utility is 
$(1-\alphaCR_i)(k-1) +  \alphaCR_i (1-\deltaCR_i k) (k-1)$.
Straightforward manipulations show that $(k,k)$ is an equilibrium if 
\begin{equation}\label{eqCR}
k(1-\alphaCR_i-\alphaCR_i\deltaCR_i) \le 1-\alphaCR_i\deltaCR_i
\end{equation}
for $i = 1, 2,$. It follows that if $1
\le\alphaCR_i-\alphaCR_i\deltaCR_i$, then $(k,k)$ is an
equilibrium for all $k$.  However, if $1
> \alphaCR_i-\alphaCR_i\deltaCR_i$, then $(k,k)$ is an equilibrium
only for sufficiently small $k$, those for which (\ref{eqCR}) holds.
Though there may be other equilibria,
this can be viewed as consistent with the observation that we see less
cooperation as $L$ increases.  
}

Just to be clear, we do not mean to imply that translucency is the
unique ``right'' explanation for cooperation in social dilemmas and
all the other explanations that we discussed above 
are ``wrong''.  
There are probably a
number of factors that contribute to cooperation.   We hope in future
work to tease these apart.

\section{Discussion}\label{sec:discussion}

We have presented an approach that explains a number of well-known
observations regarding the extent of cooperation in social dilemmas.  
In addition, our approach can also be applied to explain the apparent
contradiction that people cooperate more in a one-shot Prisoner's
dilemma when they do not know the other player's choice than when they do.
In the latter case, Shafir and Tversky
\citeyear{ShafirTversky92} found that most people (90\%) defect, while
in the former case, only 63\% of people defect. Our model of
translucent players predicts this behavior: if player 1 knows 
player 2's choices then there is no translucency, so our model predicts
that player 1 defects for sure. On the other hand, if player 1 does
not know player 2's choice and believes that he is to some extent
translucent, then, as shown in Proposition \ref{prop:PD}, he may be
willing to cooperate. Seen in this light, our model can also be
interpreted as an attempt to formalize \emph{quasi-magical thinking}
\cite{ShafirTversky92}, the kind of reasoning that is supposed
to motivate those people who believe that the others' reasoning is
somehow influenced by their own thinking, even though they know that
there is no causal relation between the two. Quasi-magical thinking

has also been formalized by Masel \citeyear{Masel} in the context of
the Public Goods 
game and by Daley and Sadowski
\citeyear{DaleySadowski} in the context of symmetric $2\times 2$
games.  The notion of translucency goes beyond these models,
since it may
be applied to a much larger set of games. 

Besides a retrospective explanation, our model makes new predictions for
social dilemmas which,
to the best of our knowledge, have never been tested in the lab. 
In particular, 
it predicts that 
\begin{itemize}
\item the degree of cooperation in Traveler's Dilemma increases as the
  difference $H-L$ increases; 
\item for fixed $L$ and $N$, the degree of cooperation in Bertrand
  Competition increases as $H$ increases, and what really matters is
  the ratio $L/H$.  
\end{itemize}

Clearly much more experimental work needs to be done to validate
the approach.  
For one thing, it is important to understand the predictions it makes 
for other social dilemmas and for games that are not social
dilemmas.  Perhaps even more important would be to see if we can
experimentally verify that people believe that they are to some extent
translucent, and, if so, to get a sense of what the value of $\alpha$
is.  In light of the work on watching eyes mentioned in the
introduction, it would also be interesting to know what could be done
to manipulate the value of $\alpha$.  

One feature of our approach is that, at least if we take the concern
with translucency to be due to an opponent discovering what you are
going to do (rather than other members of your social group
discovering what you are going to do), then, unlike many other
approaches to explaining social dilemmas, our approach does not involve
modifying the utility function; that is, we can apply translucency
while still identifying utility with the material payoff.
While this make it an arguably
simpler explanation, that does not necessarily make it ``right'', of course.
We do not in fact believe that there is a 
unique ``right'' explanation for cooperation in social dilemmas and
all the other explanations that we discussed above 
are ``wrong''.  
There are probably a
number of factors that contribute to cooperation.   We hope in future
work to tease these apart.

\commentout{
Do we really need to assume that the game has only one Nash
equilibrium? As we mentioned in Section \ref{sec:explanation}, our
results would have been qualitatively the same if we had not assumed
that players believe that the other players will either cooperate or
defect. As a consequence of this, our results would be qualitatively
the same if we had more Nash equilibria, since we may assume that each
strategy that is part of a Nash equilibrium is played with some
probability. For instance, in the Bertrand Competition with $L=1$,
both $s=1$ and $s=2$ are part of a Nash equilibrium. We may assume
that player $i$ believes that player $j$ cooperates with probability
$\beta$ and plays $s=1$ with probability $\beta_1$ and $s=2$ with
probability $\beta_2$ ($\beta_1+\beta_2=1-\beta$).  
}


\commentout{
As we mentioned, there have been many attempts to explain
cooperation in social dilemmas, especially recently.  

Most of other approaches that we are aware of are not able to obtain
all the regularities that we have mentioned. 
\begin{itemize}
\item The Fehr and Schmidt \citeyear{Fe-Sc} inequity-aversion model
assumes that subjects play a Nash equilibrium of a modified game, in
which players do not only care about their monetary payoff, but also
they care about equity. Specifically, player $i$'s utility when
strategy $s$ is played is assumed to be 
\newcommand{\alphaEF}{a^{FS}}
\newcommand{\betaEF}{b^{FS}}
$U_i(s)=u_i(s)-\frac{\alphaEF_i}{N-1}\sum_{j\neq
  i}\max(u_j(s)-u_i(s),0)-\frac{\betaEF_i}{N-1}\sum_{j\neq
  i}\max(u_i(s)-u_j(s),0)$, where $u_i(s)$ is the material payoff of
player $i$, and $0\leq\betaEF_i\leq\alphaEF_i$ are individual parameters,
where $\alphaEF_i$
represents the extent to which player $i$ is averse to inequity in
favor of others, and $\betaEF_i$ represents his
aversion to inequity in his favor.
\commentout{
It is easy to check that if the parameters $\alphaEF_i$ and $\beta_i$
are such that $U_i(C,C,\ldots,C)\geq U_i(C,\ldots,C,D,C,\ldots,C)$ for
some set of $N$ players, then this inequality holds for all players.
It follows that, if we fix a pool of $n$
players and we randomly extract two samples of $N_1$ and $N_2$
subjects and have them play the PGG, then mutual cooperation is an
equilibrium in the first group if and only if it is in the second
group.  Consequently, Fehr and Schmidt's model does not predict an 
effect of the group size on cooperation in the Public Goods game. 
Consider the 2-player Bertrand Competition.  The payoff for both
players of $(k,k)$ is $k/2$.  If $k > L$ and player $i$ deviates to
$k-1$, then his 
payoff is $(1-\betaEF_i)(k-1)$.  Thus, $(k,k)$ is an equilibrium in
the Fehr-Schmidt model iff $k=L$ or  $k/2 \ge \max_{i=1,2}(1-\betaEF_i)(k-1)$.  
Easy algebra shows that the latter condition holds iff 
$(1/2 - \betaEF_i)k \le  1-\betaEF_i$ for $i = 1,2$.  Thus, if
$\betaEF_i \ge 1/2$ for $i=1,2$, then $(k,k)$ is an equilibrium for
all $k$.  If $\betaEF_i < 1/2$ for some $i$ then we may get an
equilibrium for small values of $k$, but the equilibrium for $k > L$
will disappear once the floor is raised sufficiently high.)}
Consider the Public Goods game with $N$ players. The strategy profile
$(x,\ldots,x)$, where all players contribute $x$ gives player $i$ a
utility of $(1-x) + \rho Nx$.  If $x > 0$ and player $i$ contributes
$x' < x$, then 
his payoff is $(1-x') + \rho((N-1)x + x') - \betaEF_i \rho (x-x')$.
Thus, $(x,\ldots, x)$ is an equilibrium if $\betaEF_i \rho (x-x') \ge
(1-\rho)(x-x')$, that is, if $\betaEF_i \ge (1-\rho)/\rho$.  Thus,
if $\betaEF_i \ge (1-\rho)/\rho$ for all players $i$, then $(x,\ldots,
x)$ is an equilibrium 
for all choices of $x$ and all values of $N$. While there may be other
pure and mixed strategy equilibria, it is not
hard to show that if $\betaEF_i < (1-\rho)/\rho$, then player $i$ will
play 0 in every equilibrium (i.e., not contribute anything). 
As a consequence, assuming, as in our model, that players believe that
there is a probability $\beta$ that other agents will cooperate and
that the other agents either cooperate or defect, Fehr and Schmidt
\citeyear{Fe-Sc} model does not make any clear prediction of a 
group-size effect on cooperation in the public goods game.

\item  McKelvey and Palfrey's \citeyear{MK-Pa95} \emph{quantal response
  equilibrium (QRE)} is defined as follows.%
\footnote{We actually define here   a particular instance of QRE
  called the \emph{logit QRE}; $\lambda$ is a free parameter of this model.}
Taking $\sigma_i(\ssigma)$ to be the probability that mixed strategy
$\sigma_i$ assigns to the pure strategy $\ssigma$, given $\lambda>0$, 
a mixed strategy profile $\sigma$ is a QRE if, for each player $i$, 
$\sigma_i(\ssigma) = \frac{e^{\lambda 
      EU_i(\ssigma,\sigma_{-i})}}{\sum_{s_i'\in S_i}e^{\lambda
    EU_i(s_i',\sigma_{-i})}}$. 

To see that QRE does not describe human behaviour well in social
dilemmas, observe that in the Prisoner's Dilemma, for all choices of
parameters $b$ and $c$ in the game, all choices of the
parameter $\lambda$, all players $i$, and all (mixed) strategies $s_{-i}$ of player
$-i$, we have $EU_i(C,s_{-i})<EU_i(D,s_{-i})$. Consequently, whatever
the QRE $\sigma$ is, we must have 
$\sigma_i(C)<\frac{1}{2}<\sigma_i(D)$, that is, QRE predicts
that the degree of cooperation can never be larger than 50\%. However,
experiments show that we can increase the benefit-to-cost ratio so as to reach
arbitrarily large degrees of cooperation (close to 80\% in \cite{CJR}
with $b/c=10$). 

\item \emph{Iterated regret minimization} \cite{HP11b} does not make appropriate
predictions in Prisoner's Dilemma and the Public Goods game, because
it predicts that if there is a dominant strategy then it will be
played, and in these two games, playing the Nash equilibrium is the
unique dominant strategy. 
\item Capraro's \citeyear{Ca} notion of \emph{cooperative equilibrium}, 
while correctly predicting the effects of
the size of the group on cooperation in the Bertrand Competition and
the Public Goods game \cite{BarceloCapraro}, fails to predict
the negative effect of the price floor on cooperation in the Bertrand
Competition.  
\item Rong and Halpern's \citeyear{RH} notion of \emph{cooperative
  equilibrium} (which is different from that of Capraro \citeyear{Ca})
focuses on 2-player games.  However, the definition for
  games with greater than 2 players does not predict the decrease in
  cooperation as $N$ increases in Bertrand Competition, nor the
  increase as $N$ increases in the Public Goods Game.
\end{itemize}
\newcommand{\alphaCR}{a^{CR}}
\newcommand{\deltaCR}{b^{CR}}
\commentout{
The one approach besides ours that we are aware of that obtains all the 
regularities discussed above is that of 
Charness and Rabin \citeyear{Ch-Ra}.  Charness and Rabin, like Fehr
and Schmidt \citeyear{Fe-Sc},
assume that agents play a Nash equilibrium of a modified game, where
players care not only about their personal material payoff, but 
also about the social welfare and the outcome of the least
fortunate person. Specifically, player $i$'s utility is assumed to be
$(1-\alphaCR_i)u_i(s)+\alphaCR_i(\deltaCR_i\min_{j=1,\ldots,N}u_j(s)+(1-\deltaCR_i)\sum_{j=1}^Nu_j(s))$.
Assuming, as in our model, that agents believe that other players
either cooperate or defect and that they cooperate with probability
$\beta$, then it is not hard to see that Charness and Rabin
\citeyear{Ch-Ra} also predict all the regularities that we have been
considering.  
}

\commentout{
Consider the 2-player Bertrand Competition.  If $(k,k)$ is played, then
player $i$ gets a utility of $(1-\alphaCR_i)k/2 + 
 \alphaCR_i (\deltaCR_i k/2 + (1-\deltaCR_i) k)$; if $k > L$ and 
player $i$ deviates to $k-1$, then his utility is 
$(1-\alphaCR_i)(k-1) +  \alphaCR_i (1-\deltaCR_i k) (k-1)$.
Straightforward manipulations show that $(k,k)$ is an equilibrium if 
\begin{equation}\label{eqCR}
k(1-\alphaCR_i-\alphaCR_i\deltaCR_i) \le 1-\alphaCR_i\deltaCR_i
\end{equation}
for $i = 1, 2,$. It follows that if $1
\le\alphaCR_i-\alphaCR_i\deltaCR_i$, then $(k,k)$ is an
equilibrium for all $k$.  However, if $1
> \alphaCR_i-\alphaCR_i\deltaCR_i$, then $(k,k)$ is an equilibrium
only for sufficiently small $k$, those for which (\ref{eqCR}) holds.
Though there may be other equilibria,
this can be viewed as consistent with the observation that we see less
cooperation as $L$ increases.  
}

}
Of course, we do not have to assume $\alpha > 0$ to get cooperation in 
social dilemmas such as 
Traveler's Dilemma or Bertrand Competition.  But 
we do if we want to consider what we believe is the appropriate
equilibrium notion.   
Suppose that 
rational
players are chosen at random from a population and play a social
dilemma.  Players will, of course, then update their beliefs about the
likelihood of seeing cooperation, and perhaps change their strategy as
a consequence. Will these beliefs stabilize and the strategies played
stabilize?  By \emph{stability} here, we mean that
(1) players are all best responding to their beliefs, 
and (2) players' beliefs about the strategies played by
others are correct: if player $i$ ascribes probability $p$ to player
$j$ playing a strategy $\ssigma_j$, then in fact a proportion $p$ of
players in the population play $\ssigma_j$.   
We have deliberately been fuzzy here about whether we mean best
response in the sense of Definition~\ref{brstandard} or 
Definition~\ref{brdefinitionnew}.  If we use
Definition~\ref{brstandard} (or, equivalently use 
Definition~\ref{brdefinitionnew} and take $\alpha = 0$), then 
it is easy to see (and
well known) that the only way that this can happen is if the
distribution of strategies played by the players represents a mixed
strategy Nash equilibrium.  On the other hand, if $\alpha > 0$
and we use  Definition~\ref{brdefinitionnew}, then
we can have stable beliefs that accurately reflect the strategies used
and have cooperation (in all the other social
dilemmas that we have studied).  
We make this precise in 
\fullv{Appendix~\ref{sec:translucenteq},}
\shortv{the full paper,}
using the framework of Halpern and Pass \citeyear{HaPa13}, 
by defining a notion of \emph{translucent equilibrium}.  Roughly
speaking, we construct a model where, at all states, players are 
translucently rational (so we have common belief of translucent
rationality), the strategies 
used are common knowledge, and we nevertheless
have cooperation at some states.  
Propositions~\ref{prop:PD}--\ref{pro:Bertrand} play a key role in
this construction; indeed, as long as the strategies used satisfy the 
constraints imposed by these results, we get a translucent equilibrium.


\shortv{In the full paper,}
\fullv{In Appendix~\ref{sec:translucenteq},} we also characterize those
profiles of 
strategies that can be translucent equilibria, using ideas similar in
spirit to those of Halpern and Pass \citeyear{HaPa13}. 
While allowing people
to believe that they are to a certain extent transparent means that
the set of translucent equilibria is a superset of the set of Nash
equilibria, not all strategy profiles can be
translucent equilibria. For example, (C,D) is not a translucent
equilibrium in Prisoner's Dilemma.  
We have not focused on translucent equilibrium in
the main text, because it makes
strong assumptions about players' rationality and beliefs (e.g., it
implicitly assumes common belief of translucent rationality).  We do
not need such strong assumptions for our results.

\commentout{
Another issue that we have not addressed here is a solution concept
based on translucency.  This becomes particularly relevant if a game
is played repeatedly and most of the players are rational.  For the
rest of this discussion, we assume that all players are rational, just
to simplify this presentation.

 Suppose that, at each round, players are chosen at random from a
 population and play a social dilemma.
Players will, of course, then update their beliefs about the
likelihood of seeing cooperation.  Will these beliefs stabilize after
a while, so that all players are players have the same beliefs and are
playing a best response to their beliefs, under the assumption that
$\alpha = 0$ (so that the notion of best response is that given in 
Definition~\ref{brstandard})?  
}


\fullv{


\appendix




%



\fullv{
\section{Proofs}\label{sec:proofs}

Here we provide complete proofs of
Propositions~\ref{prop:PD}--\ref{pro:Bertrand}. 
We repeat the statements of the propositions for the convenience of
the reader.

\newenvironment{RETHM}[2]{\trivlist \item[\hskip 10pt\hskip\labelsep{\sc #1\hskip 5pt\relax\ref{#2}.}]\it}{\endtrivlist}
\newcommand{\rethm}[1]{\begin{RETHM}{Theorem}{#1}}
\newcommand{\erethm}{\end{RETHM}}
\newcommand{\relem}[1]{\begin{RETHM}{Lemma}{#1}}
\newcommand{\recor}[1]{\begin{RETHM}{Corollary}{#1}}
\newcommand{\repro}[1]{\begin{RETHM}{Proposition}{#1}}
\newcommand{\erepro}{\end{RETHM}}
\newcommand{\erelem}{\end{RETHM}}
\newcommand{\erecor}{\end{RETHM}}


\repro{prop:PD}
In Prisoner's Dilemma,  it is translucently rational for a player of 
type $(\alpha,\beta,C)$ to cooperate
if and only if 
$\alpha \beta b \ge c$.
\erepro
\begin{proof}
If player $i$ has type $(\alpha,\beta,C)$ and cooperates in Prisoner's
Dilemma, then his expected
payoff is $\beta(b-c) - (1-\beta)c$, since player $i$ believes that
$j \ne i$
will cooperate with 
probability $\beta$.  However, if $i$ deviates from his intended
strategy of cooperation, then $j$ will catch him with probability
$\alpha$ and also defect.  Thus, if $i$ deviates, then $i$'s belief
that $j$ will cooperate goes down from $\beta$ to
$(1-\alpha)\beta$.  
(We remark that this is the case in all social dilemmas; this fact
will be used in all our arguments.)
This means that $i$'s expected payoff if he
deviates by defecting is
$(1-\alpha)\beta b$.  So cooperating
is a best response if $\beta(b-c) - (1-\beta)c \ge (1-\alpha)\beta b$.   A
little algebra 
shows that this reduces to $\alpha \beta b \ge c$. 
\end{proof}

\repro{prop:TD}
In Traveler's Dilemma, 
it is translucently rational for a player of $(\alpha,\beta,C)$ to cooperate
if and only if
$$b\le
\left\{
\begin{array}{ll}
\frac{(H-L)\beta}{1-\alpha\beta} &\mbox{if $\alpha\ge\frac12$}\\
\min\left(\frac{(H-L)\beta}{1-\alpha\beta},\frac{H-L-1}{1-2\alpha}\right)
&\mbox{if $\alpha<\frac12$.}
\end{array}
\right.$$
\erepro

\begin{proof}
If player $i$ has type $(\alpha,\beta,C)$ and cooperates in Traveler's
Dilemma, then his expected
payoff is $\beta H + (1-\beta)(L-b)$, since player $i$ believes that
$j \ne i$
will cooperate with 
probability $\beta$.  If $i$ deviates and plays $x\ne H$, then $j$ will catch him with probability
$\alpha$ and play $L$.  Recall from the proof of
Proposition~\ref{prop:PD} that, if $i$ deviates, $i$'s belief that $j$
cooperates is $(1-\alpha)\beta$. 
This means that $i$'s expected payoff if he
deviates to $x < H$ is
$(1-\alpha)\beta (x+b) + (1 - \beta + \alpha\beta)(L-b)$ if $x > L$, and 
$(1-\alpha)\beta (L+b) + (1- \beta + \alpha \beta)L = L +
(\beta -\alpha\beta)  b$ if $x = L$.

It is easy to see that $i$ maximizes his expected payoff either if $x =
H-1$ or $x=L$.  Thus, cooperation is a best response if 
$\beta H + (1-\beta)(L-b) \ge \max((1-\alpha)\beta(H +b -1) + (1-
\beta + \alpha \beta)(L-b), L + (\beta -\alpha\beta)  b)$. 
Again, straightforward algebra shows that this condition is equivalent
to the one stated, as desired.
(It is easy to check that if $\alpha \ge 1/2$, then the condition 
$\beta H + (1-\beta)(L-b) \ge (1-\alpha)\beta(H +b -1) + (1- \beta +
\alpha \beta)(L-b) $ is guaranteed to hold, which is why we get the
two cases depending on whether $\alpha \ge 1/2$.)
\end{proof}

\repro{pro:PCG}
In the Public Goods game with $N$ players, it is translucently rational
for a player of type 
$(\alpha,\beta,C)$ to cooperate
if and only if 
$\alpha\beta\rho(N-1) \ge 1 - \rho$.
\erepro

\begin{proof}
Suppose player $i$, of type $(\alpha,\beta,C)$, cooperates.  Since he expects a player to cooperate
with probability $\beta$, the expected number of cooperators among the
other players is $\beta(N-1)$. Since he himself will cooperate, the
total expected number of cooperators is $1+\beta(N-1)$.  
Since $i$'s payoff is $\rho m$ if $m$ players
including him cooperate, and thus is linear in the number of
cooperators, his  expected payoff is exactly his payoff if the
expected number of players cooperate.  
Since his expected payoff with $1 + \beta(N-1)$ cooperators is 
$\rho(1 + \beta(N-1))$, this is his expected payoff if he cooperates.   

On the other hand, if $i$ deviates by contributing $x < 1$, his
expected payoff if $m$ other players cooperate is $(1-x) + \rho
(m+x)$.  
Again, if $i$ deviates, his expected belief that $j$ will cooperate is 
$(1-\alpha)\beta$.
Thus, the expected
number of cooperators is $(1-\alpha)\beta(N-1)$, and
his expected payoff is $1 - x + \rho((1-\alpha)\beta(N-1) + x)$.  
Since $\rho < 1$, he gets the highest expected payoff by defecting
(i.e., taking $x=0$).  

Thus, cooperation is a best response if $\rho(1 + \beta(N-1)) \ge 1 +
\rho(1-\alpha)\beta(N-1)$.    Simple algebra shows that this condition
holds iff $\alpha\beta\rho(N-1) \ge 1-\rho$.
\end{proof}

\repro{pro:Bertrand}
In Bertrand Competition, 
it is translucently rational for a player of type $(\alpha,\beta,C)$
to cooperate iff $\beta^{N-1} \ge
\max(\gamma^{N-1}N(H-1)/H,f(\gamma,N)LN/H)$, where  
$\gamma = (1-\alpha)\beta$ and 
$f(\gamma,N) = \sum_{k=0}^{N-1}
\binom{N-1}{k}(1-\gamma)^k\gamma^{N-k-1}/(k+1)$. 
\erepro
\begin{proof}
\commentout{
One has 
$$
EU_1((H,L,\ldots,L),(1,0,\ldots,0))=\frac{p_{-1}^{N-1}H}{n}
$$
Similar to the previous proposition, we then find

$$
EU_1(\omega,L)=

$$

$$
\sum_{r=0}^{N-1}\sum_{t=0}^{r}\sum_{w=0}^{N-1-r}\binom{N-1}{r}\binom{r}{t}\binom{N-1-r}{w}p_{-1}^{t+w}(1-p_{-1})^{N-1-t-w}\alpha_1^{N-1-r}(1-\alpha_1)^{r}\frac{L}{N-t}
$$

This time, this expression is not easily simplificable (using wolfram). But we can complete the proof in the following way: fixing $t=N-1$ and $w=0$ we obtain

$$
EU_1(\omega,D)\geq p_{-1}^{N-1}L\sum_{r=0}^{N-1}\binom{N-1}{r}\alpha_1^{N-1-r}(1-\alpha_1)^r=p_{-1}^{N-1}L
$$

Consequently, for $N$ large enough , we certainly have
$EU_1(\omega,D)>EU_1(\omega)$. 
}
Clearly, if player $i$ cooperates, then his expected payoff is
$\beta^{N-1}H/N$, since he gets $H/N$ if everyone else cooperates
(which happens with probability $\beta^{N-1}$), and otherwise gets
$0$.

Let $\gamma = (1-\alpha)\beta$.  Again, this is the probability that
$i$ ascribes to another player playing $H$ if he deviates.
If $i$ deviates, then it is easy to see (given his beliefs) that the
optimal choices for deviation are $H-1$ and $L$.  In the former case, 
$i$'s expected payoff is $\gamma^{N-1}(H-1)$.  
In the latter case, 
$i$'s expected payoff is $\sum_{k=0}^{N-1}\binom{N-1}{k} (1-\gamma)^k\gamma^{N-k-1} L/(k+1)$:
with probability  $(1-\gamma)^k\gamma^{N-k-1}$, exactly $k$ other
players will play $L$, and $i$'s payoff will be $L/(k+1)$.  Moreover, each possible subset of $k$ defectors, has to be count $\binom{N-1}{k}$ times.
Let $f(\gamma,N) = \sum_{k=0}^{N-1}\binom{N-1}{k} (1-\gamma)^k\gamma^{N-k-1}/(k+1)$.
Note that, as the notation suggests, this expression depends only on
$\gamma$ and $N$ (and not any of the other parameters of the game).
Thus, $i$'s expected payoff in this case is $f(\gamma,N)L$, so 
cooperation is a best response iff 
$\beta^{N-1}H/N \ge \max(\gamma^{N-1}(H-1), f(\gamma,N)L)$ 
%
or, equivalently, $\beta^{N-1} \ge \max(\gamma^{N-1}(H-1)N/H,
f(\gamma,N)LN/H)$.%
\footnote{While it seems
difficult to find a closed-form expression for $f(\gamma,N)$,
this does not matter for our purposes.
Note that the expected value of $L/(k+1)$ cannot be computed by
plugging in the expected value of $k$, in the spirit of our earlier
calculations, since $L/(k+1)$ is not linear in $k$.} 
\end{proof}

\commentout{
\section{Comparison to other models}\label{sec:comparison}


Here we show that approaches (that we are aware of) other than
that of Charness and Rabin and possibly that of Bolton and Ockenfels are
not able to obtain 
all the regularities that we mentioned in Section~\ref{se:social
  dilemmas model}.  
We consider a number of approaches in turn.
\begin{itemize}
\item The Fehr and Schmidt \citeyear{Fe-Sc} \emph{inequity-aversion model}
assumes that subjects play a Nash equilibrium of a modified game, in
which players do not only care about their monetary payoff, but also
they care about equity. Specifically, player $i$'s utility when
strategy $s$ is played is assumed to be 
\newcommand{\alphaEF}{a^{FS}}
\newcommand{\betaEF}{b^{FS}}
$U_i(s)=u_i(s)-\frac{\alphaEF_i}{N-1}\sum_{j\neq
  i}\max(u_j(s)-u_i(s),0)-\frac{\betaEF_i}{N-1}\sum_{j\neq
  i}\max(u_i(s)-u_j(s),0)$, where $u_i(s)$ is the material payoff of
player $i$, and $0\leq\betaEF_i\leq\alphaEF_i$ are individual parameters,
where $\alphaEF_i$
represents the extent to which player $i$ is averse to inequity in
favor of others, and $\betaEF_i$ represents his
aversion to inequity in his favor.
\commentout{
It is easy to check that if the parameters $\alphaEF_i$ and $\beta_i$
are such that $U_i(C,C,\ldots,C)\geq U_i(C,\ldots,C,D,C,\ldots,C)$ for
some set of $N$ players, then this inequality holds for all players.
It follows that, if we fix a pool of $n$
players and we randomly extract two samples of $N_1$ and $N_2$
subjects and have them play the PGG, then mutual cooperation is an
equilibrium in the first group if and only if it is in the second
group.  Consequently, Fehr and Schmidt's model does not predict an 
effect of the group size on cooperation in the Public Goods game. 
Consider the 2-player Bertrand Competition.  The payoff for both
players of $(k,k)$ is $k/2$.  If $k > L$ and player $i$ deviates to
$k-1$, then his 
payoff is $(1-\betaEF_i)(k-1)$.  Thus, $(k,k)$ is an equilibrium in
the Fehr-Schmidt model iff $k=L$ or  $k/2 \ge \max_{i=1,2}(1-\betaEF_i)(k-1)$.  
Easy algebra shows that the latter condition holds iff 
$(1/2 - \betaEF_i)k \le  1-\betaEF_i$ for $i = 1,2$.  Thus, if
$\betaEF_i \ge 1/2$ for $i=1,2$, then $(k,k)$ is an equilibrium for
all $k$.  If $\betaEF_i < 1/2$ for some $i$ then we may get an
equilibrium for small values of $k$, but the equilibrium for $k > L$
will disappear once the floor is raised sufficiently high.)}
Consider the Public Goods game with $N$ players. The strategy profile
$(x,\ldots,x)$, where all players contribute $x$ gives player $i$ a
utility of $(1-x) + \rho Nx$.  If $x > 0$ and player $i$ contributes
$x' < x$, then 
his payoff is $(1-x') + \rho((N-1)x + x') - \betaEF_i \rho (x-x')$.
Thus, $(x,\ldots, x)$ is an equilibrium if $\betaEF_i \rho (x-x') \ge
(1-\rho)(x-x')$, that is, if $\betaEF_i \ge (1-\rho)/\rho$.  Thus,
if $\betaEF_i \ge (1-\rho)/\rho$ for all players $i$, then $(x,\ldots,
x)$ is an equilibrium 
for all choices of $x$ and all values of $N$. While there may be other
pure and mixed strategy equilibria, it is not
hard to show that if $\betaEF_i < (1-\rho)/\rho$, then player $i$ will
play 0 in every equilibrium (i.e., not contribute anything). 
As a consequence, assuming, as in our model, that players believe that
there is a probability $\beta$ that other agents will cooperate and
that the other agents either cooperate or defect, Fehr and Schmidt
\citeyear{Fe-Sc} model does not make any clear prediction of a 
group-size effect on cooperation in the public goods game.

\item  McKelvey and Palfrey's \citeyear{MK-Pa95} \emph{quantal response
  equilibrium (QRE)} is defined as follows.%
\footnote{We actually define here   a particular instance of QRE
  called the \emph{logit QRE}; $\lambda$ is a free parameter of this model.}
Taking $\sigma_i(\ssigma)$ to be the probability that mixed strategy
$\sigma_i$ assigns to the pure strategy $\ssigma$, given $\lambda>0$, 
a mixed strategy profile $\sigma$ is a QRE if, for each player $i$, 
$\sigma_i(\ssigma) = \frac{e^{\lambda 
      EU_i(\ssigma,\sigma_{-i})}}{\sum_{s_i'\in S_i}e^{\lambda
    EU_i(s_i',\sigma_{-i})}}$. 

To see that QRE does not describe human behaviour well in social
dilemmas, observe that in the Prisoner's Dilemma, for all choices of
parameters $b$ and $c$ in the game, all choices of the
parameter $\lambda$, all players $i$, and all (mixed) strategies $s_{-i}$ of player
$-i$, we have $EU_i(C,s_{-i})<EU_i(D,s_{-i})$. Consequently, whatever
the QRE $\sigma$ is, we must have 
$\sigma_i(C)<\frac{1}{2}<\sigma_i(D)$, that is, QRE predicts
that the degree of cooperation can never be larger than 50\%. However,
experiments show that we can increase the benefit-to-cost ratio so as to reach
arbitrarily large degrees of cooperation (close to 80\% in \cite{capraro2014heuristics}
with $b/c=10$). 

\item \emph{Iterated regret minimization} \cite{HP11b} does not make appropriate
predictions in Prisoner's Dilemma and the Public Goods game, because
it predicts that if there is a dominant strategy then it will be
played, and in these two games, playing the Nash equilibrium is the
unique dominant strategy. 
\item Capraro's \citeyear{Ca} notion of \emph{cooperative equilibrium}, 
while correctly predicting the effects of
the size of the group on cooperation in the Bertrand Competition and
the Public Goods game \cite{barcelo2015group}, fails to predict
the negative effect of the price floor on cooperation in the Bertrand
Competition.  
\item Rong and Halpern's \citeyear{HR1,RH} notion of \emph{cooperative
  equilibrium} (which is different from that of Capraro \citeyear{Ca})
focuses on 2-player games.  However, the definition for
  games with greater than 2 players does not predict the decrease in
  cooperation as $N$ increases in Bertrand Competition, nor the
  increase as $N$ increases in the Public Goods Game.

\item Bolton and Ockenfels' \citeyear{Bo-Oc} \emph{inequity-aversion model}
  assumes that a player $i$ aims at maximizing his or her 
\emph{motivational function} 
$v_i=v_i(x_i,\sigma_i)$, where $x_i$ is $i$'s monetary
  payoff and
  $\sigma_i=\sigma_i(x_1,\sum_{j=1,\ldots,N}x_j)=x_i/\sum_{j=1,\ldots,N}
  x_j$. The motivational function is assumed to be twice
  differentiable, weakly increasing in the first argument, and concave
  in the second argument with a maximum at $\sigma_i=\frac1N$, but
  otherwise is unconstrained.  For each of the social dilemmas that we
  have considered, it is not hard to define a motivational function
  that will obtain the regularities observed.
However, we have not been able to find a single motivational function
that gives the observed regularities for all four social dilemmas that
we have considered.
In any case, just as with the Charness and Rabin model, once we consider
the interaction between social groups and translucency, we can 
distinguish our approach from the inequity-aversion model.
Specifically, consider a situation where people are given a choice
between  giving \$1 to an anonymous 
stranger, rather than burning it. In such a situation, inequity
aversion would predict that people would burn the dollar to maintain 
equity (i.e., a situation where no one gets \$1).  However, perhaps
not surprisingly, Capraro et al. \citeyear{capraro2014benevolent} found that
over 90\% people prefer giving away the dollar to burning it.  
Of course, translucency (and a number of other approaches) would have
no difficulty in explaining this phenomenon.
\end{itemize}

\commentout{
The one approach besides ours that we are aware of that obtains all the 
regularities discussed above is that of 
Charness and Rabin \citeyear{Ch-Ra}.  Charness and Rabin, like Fehr
and Schmidt \citeyear{Fe-Sc},
assume that agents play a Nash equilibrium of a modified game, where
players care not only about their personal material payoff, but 
also about the social welfare and the outcome of the least
fortunate person. Specifically, player $i$'s utility is assumed to be
$(1-\alphaCR_i)u_i(s)+\alphaCR_i(\deltaCR_i\min_{j=1,\ldots,N}u_j(s)+(1-\deltaCR_i)\sum_{j=1}^Nu_j(s))$.
Assuming, as in our model, that agents believe that other players
either cooperate or defect and that they cooperate with probability
$\beta$, then it is not hard to see that Charness and Rabin
\citeyear{Ch-Ra} also predict all the regularities that we have been
considering.  
}

\commentout{
Consider the 2-player Bertrand Competition.  If $(k,k)$ is played, then
player $i$ gets a utility of $(1-\alphaCR_i)k/2 + 
 \alphaCR_i (\deltaCR_i k/2 + (1-\deltaCR_i) k)$; if $k > L$ and 
player $i$ deviates to $k-1$, then his utility is 
$(1-\alphaCR_i)(k-1) +  \alphaCR_i (1-\deltaCR_i k) (k-1)$.
Straightforward manipulations show that $(k,k)$ is an equilibrium if 
\begin{equation}\label{eqCR}
k(1-\alphaCR_i-\alphaCR_i\deltaCR_i) \le 1-\alphaCR_i\deltaCR_i
\end{equation}
for $i = 1, 2,$. It follows that if $1
\le\alphaCR_i-\alphaCR_i\deltaCR_i$, then $(k,k)$ is an
equilibrium for all $k$.  However, if $1
> \alphaCR_i-\alphaCR_i\deltaCR_i$, then $(k,k)$ is an equilibrium
only for sufficiently small $k$, those for which (\ref{eqCR}) holds.
Though there may be other equilibria,
this can be viewed as consistent with the observation that we see less
cooperation as $L$ increases.  
}
}

\section{Translucent equilibrium}
\label{sec:translucenteq}
In the main text of this paper we have described how cooperation can
be rational if players are translucent, that is, if they believe that
if they switch from one strategy to another, the fact that they choose
to switch may be visible to the other players. In this appendix, we
show how to use counterfactual structures to define a notion of
equilibrium with translucent players and we observe that rationality
of cooperation shown in the main text corresponds to having a mixed
strategy translucent equilibrium, where cooperation is played with
non-zero probability.   
We start by reviewing the relevant definitions from \cite{HaPa13}.

\subsection{Game theory with translucent players}

Let $\G=\G(P,S,u)$ be a (finite) normal form game, where
$P=\{1,\ldots,N\}$ is the set of players, each of which has finite
pure strategy set $S_i$ and utility function $u_i$. 

\begin{definition}\label{defin:counterfactual structure}
{\rm A finite counterfactual structure appropriate for the game
$\mathcal G$ is a tuple $M = (\Omega, \mathbf{s}, f, \PR_1,\ldots,\PR_N)$,
where: 
\begin{itemize}
\item $\Omega$ is a finite space of states;
\item $\mathbf{s}:\Omega\to \SSigma$ is the function that associates to each state $\omega$ the strategy profile that is supposed to be played at $\omega$;
\item $f$ is the closest-state function, which describes what would
happen if player $i$ switched strategy to $s_i'$ at state $\omega$. Thus, $f:\Omega\times P\times S_i\to\Omega$ has to verify the following properties:
\begin{itemize}
\item[CS1.] $\mathbf{s}_i(f(\omega,i,\ssigma'))=\ssigma_i'$;
\item[CS2.] $f(\omega,i,\mathbf{s}_i(\omega))=\omega$.  
\end{itemize}
Property CS1 assures that, at state
$f(\omega,i,\ssigma_i')$, player $i$ plays $\ssigma_i'$, and Property CS2 assures that the state does not change if player $i$ does not change strategy.
\item $\PR_i$ are player $i$'s beliefs, which depends on the state $i$ is reasoning about. Specifically, for each $\omega\in\Omega$, $\mathcal P\mathcal R_i(\omega)$ is a
probability measure on $\Omega$ satisfying the following properties:
\begin{itemize}
\item[PR1.] $\PR_i(\omega)(\{\omega'\in\Omega :
\mathbf{s}_i(\omega')=\mathbf{s}_i(\omega)\})=1$ (where
$\strat_i(\omega)$ denotes player $i$'s strategy in $\strat(\omega)$);
\item[PR2.] $\PR_i(\omega)(\{\omega'\in\Omega :
\PR_i(\omega')=\PR_i(\omega)\})=1$.
\end{itemize}
These assumptions guarantee that player $i$ assigns probability $1$ to
his actual strategy and beliefs.  
\wbox
\end{itemize}
}
\end{definition}


We can now define $i$'s beliefs at
$\omega$ if he were to 
switch to strategy $\ssigma'$. Intuitively, if he were to switch to strategy $\ssigma'$ at
$\omega$, the probability that $i$ would assign to state $\omega'$ is
the sum of the probabilities that he assigns to all the states
$\omega''$ such that he believes that he would move from $\omega''$ to
$\omega'$ if he used strategy $\ssigma'$. Thus we define
$$\mathcal P\mathcal R_{i,\ssigma'}(\omega)(\omega'):=\sum_{\{\omega'' :
f(\omega'',i,\ssigma')=\omega'\}}\mathcal P\mathcal R_i(\omega)(\omega''). 
$$

We define the expected utility of player $i$ at state $\omega$ in the
usual way as the sum of the product of his expected utility of the
strategy profile played at each state $\omega'$ and the
probability of $\omega'$:
$\EU_i(\omega)=\sum_{\omega'\in\Omega}\mathcal P\mathcal
R_i(\omega)(\omega')u_i(\mathbf{s}_i(\omega),\mathbf{s}_{-i}(\omega')).$%
\protect{\footnote{Given a profile $t = (t_1, \ldots, t_N)$, as usual, we define
$t_{-i}=(t_1,\ldots,t_{i-1},t_{i+1},\ldots,t_N)$.  We extend this
notation in the obvious way to functions like $\strat$, so that, for
example, 
$\strat_{-i}(\omega) = (\strat_1(\omega), \ldots, \strat_{i-1}(\omega),
\strat_{i+1}(\omega),\ldots, \strat_{n}(\omega))$.}}


Now we define $i$'s expected utility at $\omega$ if he were
to switch to $\ssigma'$. The usual way to do so is to simply replace $i$'s actual strategy at
$\omega$ by $\ssigma'$ at all states, keeping the strategies of the other
players the same; that is,
$$\sum_{\omega'\in\Omega}\PR_i(\omega)(\omega')u_i(\ssigma',\mathbf{s}_{-i}(\omega')).$$
In this definition,  player $i$'s beliefs about the strategies that the
other players are using do not change when he switches from
$\strat_i(\omega)$ to $\ssigma'$.  The key point of counterfactual
structures is that these beliefs may well change. Thus, we define $i$'s
expected utility at $\omega$ if he switches to $\ssigma'$ as
$$
\EU_i(\omega,\ssigma')=\sum_{\omega'\in\Omega}\mathcal
\PR_{i,\ssigma'}(\omega)(\omega')u_i(\ssigma',\mathbf{s}_{-i}(\omega')).  
$$

Finally, we can define rationality in counterfactual structures using
these notions: 
\begin{definition}\label{def:rationality}
{\rm Player $i$ is \emph{rational} at state $\omega$ if, for all
$\ssigma' \in \SSigma_i$,
$$\EU_i(\omega) \ge \EU_i(\omega,\ssigma').$$
\wbox
}
\end{definition}

\subsection{Translucent equilibrium}\label{se:equilibria}
In this section, we define translucent equilibrium and we observe that
the results reported in the main text imply that social dilemmas have
a counterfactual structure according to which each player plays, in
equilibrium, his part of the welfare maximizing strategy with non-zero
probability.

We start with some preliminary notation. Given a probability measure
$\tau$ on a finite set $T$, let 
$\text{supp}(\tau)$ denote the support of $\tau$, that is, $\text{supp}(\tau)=\{t\in T : \tau(t)\neq0\}$. Given a mixed strategy profile $\sigma$, note that $\sigma_{-i}$ can 
can be viewed as a probability on
$S_{-i}$, where
$\sigma_{-i}(s_{-i})=\prod_{j\neq i}\sigma_{j}(s_j)$.
Similarly $\sigma$ can be viewed as a probability measure on $S$.
In the sequel, we view $\sigma_{-i}$ and $\sigma$ as probability
measures without further comment (and so talk about their support).

\begin{definition}\label{def:equilibrium}
{\rm A strategy profile $\sigma$ in a game $\G$ is 
a \emph{translucent equilibrium in a counterfactual structure $M =
(\Omega, \mathbf{s}, f, \PR_1,\ldots,\PR_N)$ 
appropriate for $\G$} if there exists a subset $\Omega' \subseteq \Omega$
such that, for each state $\omega$ in $\Omega'$, the following
properties hold:
\begin{enumerate}
\item[TE1.] $\mathbf{s}(\omega)\in\text{supp}(\sigma)$;
\item[TE2.] $\text{supp}(\PR_i(\omega))\subseteq\Omega'$;
\item[TE3.] $\mathbf{s}_{-i}(\PR_i(\omega))=\sigma_{-i}$ (i.e.,
for each strategy profile $s_{-i} \in S_{-i}$, we have 
$\sigma_{-i}(s_{-i}) = \PR_i(\omega)(\{\omega': \strat_{-i}(\omega') =
s_{-i}\})$). 
\item[TE4.] each player is rational at $\omega$.
\end{enumerate}
The mixed strategy profile $\sigma$ is a translucent equilibrium of $\G$
if there exists a counterfactual structure $M$ appropriate for $\G$ such
that $\sigma$ is a translucent equilibrium in $M$.}  
\wbox
\end{definition}
Intuitively, $\sigma$ is a translucent equilibrium in $M$ if, for each
strategy $s_i$ in the support of $\sigma_i$, the expected utility of
playing $s_i$ given that other players are playing according to
$\sigma_{-i}$ is at least as good as switching to some other strategy
$s_i'$, given what $i$ would believe about what strategies the other
players are playing if he were to switch to $s_i'$.  

This notion of translucent equilibrium is closely related to a
condition called \emph{IR} (for \emph{individually rational}) by
Halpern and Pass \citeyear{HaPa13}.  The main difference is that
Halpern and Pass considered only pure strategy profiles; we allow
mixed-strategy profiles here.  We discuss the relationship 
between the notions at greater length in Section~\ref{sec:charte}.
It is easy to see that if we restrict to opaque players, then this
definition of translucent equilibrium reduces to Nash equilibrium.

\commentout{
Let now $\G$ be a social dilemma and recall that 
$s^N$ and $s^W$ denote respectively the unique Nash equilibrium and
the unique welfare-maximizing profile. 
Let $\G$ be a social dilemma and recall that 
$s^N$ and $s^W$ denote the unique Nash equilibrium and
the unique welfare-maximizing profile, respectively. 
We define a parameterized family of counterfactual structures
$M^\G(\sigma,\alpha) = 
(\Omega,\strat,\PR_1,\ldots,\PR_N)$, where $\alpha = 
(\alpha_1,\ldots, \alpha_N)$, $\alpha_i \in [0,1]$.
Intuitively, $\alpha_j$ describes how likely player $j$ is to
learn about a deviation (i.e., how ``translucent'' the other players
are to $j$) and $\sigma=(\sigma_{-i})_{i}$ describes what player $i$ believes the other players are going to do. We assume that if $j$ detects a deviation on the part of
any player, then she will play $s_j^N$. We also assume that the beliefs are supported on profiles of strategies $(s_1,\ldots,s_N)$, where $s_i\in\{s_i^N,s_i^W\}$. Moreover, we assume that $\sigma_{-i}$ is a power measure, in the sense that, for every $i,j$, with $i\ne j$, player $i$ believes that Player $j$ will be playing the strategy $\sigma_{i,j}:=p_{i,j}s_{j}^W+(1-p_{i,j})s_j^N$, where 
\begin{enumerate}
\item $p_{i,j}=:p_{-i}$ does not depend on $j$.
\item $\sigma_{-i}=\otimes_{j\ne i}\sigma_{i,j}$.
\end{enumerate} 
We define
$M^\G(\sigma,\alpha)$ as follows.
\begin{itemize}
\item $\Omega = S\times \{0,1\}^{n}$.  (The second component of the
state, which is an element of $\{0,1\}^{n}$, is used to determine the
closest-state function.  Roughly speaking, if $v_j = 1$, then player
$j$ learns  about a deviation if there is one; if $v_j = 0$, he does not.)
\item $\strat((s,v)) = s$.
\item $f((s,v),i,s_i^*) = 
\left\{
\begin{array}{ll}
(s,v) &\mbox{if $s_i = s_i^*$,}\\
(s',v) &\mbox{if $s_i \ne s_i^*$, where $s'_i = s_i^*$ and for $j
\ne i$,}\\
&\mbox{$s'_j = s_j$ if $v_j = 0$ and $s'_j = s^N_j$ if $v_j = 1$.}
\end{array}
\right.$

\noindent Thus, if player $i$ changes strategy from $s_i$ to $s_i'$, $s_i'\neq
 s_i$, then each other player $j$ either deviates to his component of
the Nash equilibrium or continues with his current strategy, 
depending on whether $v_j$ is 0 or 1. Roughly speaking, he switches to his
component of the Nash equilibrium if he learns about a deviation
(i.e., if $v_j = 1$).
\item $\PR_i(s,v)(s',v') =\left\{
\begin{array}{lll}
0 &\mbox{if $s_i \ne s'_i$ or $v_i \ne v_i'$,}\\
\sigma_{-i}(s'_i)\Pi_{\{j \ne i: v_j = 1\}}\alpha_j\Pi\, _{\{j \ne
i: \, v_j = 0\}}(1-\alpha_j) &\mbox{otherwise.}
\end{array}
\right.$
\noindent Thus, the probability of the $s'$ component of the state is
determined 
by $\sigma$ (with the standard assumption that player $i$ knows his
own strategy).  The probability of the $v'$ component is determined by
assuming that each player $j$ independently learns about a deviation
by $i$ with probability $\alpha_j$.\footnote{For simplicity, we are
  assuming that the probability that $j$ learns about a deviation by
  $i$ is the same for each player $i$.  More generally, we could have
  a probability $\alpha_{ji}$ for the probability that $j$ learns
  about a deviation by $i$.  Nothing significant would change in our analysis.}
\end{itemize}

Consider now the Prisoner's dilemma and, using a notation consistent with the main text, assume that Player 1 believes that Player 2 plays the strategy $\beta_2C+(1-\beta_2)D$ and that Player 2 believes that Player 1 plays the strategy $\beta_1C+(1-\beta_1)D$. If $\alpha_ib\beta_{-i}\geq c$, for every $i$, then Proposition \ref{prop:PD} implies that cooperation is rational for both players. Thus, if we define $\Omega'=\{((C,C),(1,1)),((C,D),(1,1)),((D,C),(1,1)),((D,D),(1,1))\}$, then all conditions in Definition \ref{def:equilibrium} are satisfied with $\sigma=(\beta_1,\beta_2)$ and so it is a translucent equilibrium. Similarly, it is not hard to see how the other propositions discussed in the main text imply that there are mixed translucent equilibria where each player plays his part of the welfare-maximizing strategy with non-zero probability.
}

\subsection{Characterizing translucent equilibrium}\label{sec:charte}

While it is easy to see that every Nash equilibrium is a translucent
equilibrium (see Proposition \ref{prop:nashistranslucent}), the
converse is far from true.  As we show, for example, cooperation can
be an equilibrium in social dilemmas (see below and Section~\ref{sec:teinsd}). 
In this section, we provide a characterization of translucent
equilibria that will prove useful when discussing social dilemmas.


\begin{proposition}\label{prop:nashistranslucent}
 Every Nash equilibrium of $\G$ is a translucent
equilibrium.
\end{proposition}
\begin{proof}
Given a Nash equilibrium 
$\sigma=(\sigma_1,\ldots,\sigma_n)$, consider the following
counterfactual structure $M_\sigma = (\Omega, \mathbf{s}, f,
\PR_1,\ldots,\PR_N)$:
\begin{itemize}

\item $\Omega$ is the set of strategy profiles in the support of
$\sigma$;
\item $\strat(s) = s$;
\item $\PR_i(s_i,s_{-i})(s_i',s_{-i}') = 
\left\{
\begin{array}{ll}
0 &\mbox{if $s_i' \ne s_i$}\\
\sigma_{-i}(s_{-i}') &\mbox{if $s_i' = s_i$;}
\end{array}
\right.$
\item $f((s_i,s_{-i}),i,s')=(s',s_{-i})$. 
\end{itemize}

It is easy to check that $\sigma$ is a translucent equilibrium in
$M_\sigma$; we simply take $\Omega' = \Omega$.  The fact that $f$ is an
``opaque'' closest-state function, which is not affected by the strategy
used by players, means that rationality in $M$ reduces to the standard
definition of rationality.  We leave details to the reader.  
\end{proof}

Although the fact that we can consider arbitrary counterfactual
structures (appropriate for $\G$) means that many strategy profiles are
translucent equilibria, the notion of translucent equilibrium has some
bite.  For example, the strategy profile $(C,D)$, where player 1
cooperates and player 2 defects, is not a translucent
equilibrium in Prisoner's Dilemma: if player 1 believes that player 2
is playing defecting with probability 1, there are no beliefs that 1
could have that would justify cooperation.  However, as we shall see,
both $(C,C)$ and $(D,D)$ are translucent equilibria.  This follows from
the characterization of translucent equilibrium that we now give.

\begin{definition}\label{dfn:coherent}
{\rm A mixed-strategy profile $\sigma$ in $\G$ is \emph{coherent} if
for all players $i \in P$, all $s_i\in\text{supp}(\sigma_i)$, and all
$s_i'\in S_i$, there is $s_{-i}'\in S_{-i}$ such that 
$$u_i(s_i,\sigma_{-i})\geq u_i(s')$$
(where, of course, $u_i(s_i,\sigma_{-i}) = \sum_{s_{-i}'' \in S_{-i}'}
\sigma_{-i}(s_{-i}'') u_i(s_i, s_{-i}'')$).
} \wbox
\end{definition}
That is, $\sigma$ is coherent if, for all pure strategies for player $i$
in the support of $\sigma_i$, if $i$'s belief about the strategies being
played by the other  players is given by $\sigma_{-i}$, 
 there is no obviously better strategy that
$i$ can switch to in the weak sense that, if $i$ contemplates switching
to $s_i'$, there are beliefs that $i$ could have about the other players
(namely, that they would definitely play $s_{-i}'$ in this case) that
would make switching 
to $s_i'$ better than sticking with $s_i$.

It is easy to see that $(C,C)$ and $(D,D)$ in Prisoner's Dilemma are
both coherent;  on the other hand, $(C,D)$ is not.  

Halpern and Pass \citeyear{HaPa13} define a pure strategy profile to
be \emph{individually rational} if it is coherent.
Definition~\ref{dfn:coherent} extends individual rationality to mixed
strategies.  Halpern and Pass prove that a pure strategy profile
is
individually rational 
if there is a model where it is commonly known that $\sigma$ is played
and there is common belief of rationality.  The definition of
translucent equilibrium can be seen as the generalization of this
characterization of IR to mixed strategies.  As the following theorem
shows, we get an analogous representation.

\begin{theorem}\label{thm:translucenteq}
The mixed strategy profile $\sigma$ of game $\G$ is coherent iff $\sigma$ is a
translucent equilibrium of $\G$.
\end{theorem}

\begin{proof}
Let $\sigma$ be a coherent strategy profile in $\G$.  We construct a counterfactual
structure $M = (\Omega, \mathbf{s}, f, \PR_1,\ldots,\PR_N)$ as follows:
\begin{itemize}
\item $\Omega = S$;
\item $\strat(s) = s$;
\item $\PR_i(\omega)(\omega') = 
\shortv{\\ \ \ \ }
\left\{
\begin{array}{ll}
1 &\mbox{if $\omega \notin \supp(\sigma_i)$, $\omega = \omega'$}\\
0 &\mbox{if $\omega \notin \supp(\sigma_i)$, $\omega \ne \omega'$}\\
\sigma_{-i}(\strat_{-i}(\omega')) &\mbox{if $\omega \in \supp(\sigma_i)$,
$\strat_i(\omega') = \strat_i(\omega)$}\\
0 &\mbox{if $\omega \in \supp(\sigma_i)$,
$\strat_i(\omega') \ne \strat_i(\omega)$};
\end{array}
\right.$
\item $f(\omega,i,s_i') = 
\shortv{\\ \ \ \ }
\left\{
\begin{array}{ll}
(s_i',\mathbf{s}_{-i}(\omega)) &\mbox{if $\omega \notin
\supp(\sigma_i)$}\\
\omega &\mbox{if $\omega \in \supp(\sigma_i)$, $s_i' = \strat_i(\omega)$}\\
(s_i',s_{-i}') & \mbox{if $\omega \in \supp(\sigma_i)$, $s_i' \ne
\strat_i(\omega)$, where $s_{i}'$ is a}\\ &\mbox{strategy such that
$u_i(\mathbf{s}_i(\omega),\sigma_{-i})\geq u_i(s')$;} \\
&\mbox{such a strategy is
guaranteed to exist since}\\ &\mbox{$\sigma$ is coherent.} 
\end{array}
\right.$
\end{itemize}

We first show that $M$ is a finite counterfactual structure appropriate
for $\G$; in particular, $\PR_i$ satisfies PR1 and PR2 and $f$
satisfies CS1 and CS2.  For PR1 and PR2, there are two cases.  If
$\omega \notin \supp(\sigma)$, then $\PR_i(\omega)(\omega) = 1$, so PR1
and PR2 clearly hold.  If $\omega \notin \supp(\omega)$, then 
$\PR_i(\omega)(\omega) > 0$ iff $\strat_i(\omega) = \strat_i(\omega')$.
Moreover, if $\strat_i(\omega) = \strat_i(\omega')$, then it is
immediate from the definition that $\PR_i(\omega) = \PR_i(\omega')$, so
PR2. holds.  That CS1 and CS2 hold is immediate from the definition of $f$.

To show that $\sigma$ is a translucent equilibrium in $M$, let
$\Omega' = \supp(\sigma)$.  For each state $\omega \in \Omega'$, TE1
clearly holds.  Note that if $\omega \in
\supp(\sigma)$, then $\PR_i(\omega) = (\strat_i(\omega),
\sigma_{-i}(\omega))$ (identifying the strategy profile with a
probability measure), so TE2 and TE3 clearly hold. 
It remains to show that TE4 holds, that is, 
that every player is rational at every state $\omega\in\Omega'$.

Thus, we must show that $\EU_i(\omega) \ge \EU(\omega,s_i^*)$ for all
$s_i^* \in S_i$.  Note that
$$\begin{array}{lll}
\EU_i(\omega)
&= &\sum_{\omega'\in\Omega}\mathcal
\PR_i(\omega)(\omega')u_i(\mathbf{s}_i(\omega),\mathbf{s}_{-i}(\omega')) \\
&= &\sum_{\{\omega'\in\Omega: \strat_i(\omega') = \strat_i(\omega)\}}
\sigma_{-i}(\strat_{-i}(\omega'))
u_i(\mathbf{s}_i(\omega),\mathbf{s}_{-i}(\omega'))\\ 
&= &\sum_{s_{-i}''\in S_{-i}}
u_i(\mathbf{s}_i(\omega),s_{-i}'')\\ 
&=&u_i(\strat_i(\omega),\sigma_{-i}).
\end{array}
$$
By definition,
$$
\EU_i(\omega,\ssigma_i^*)=\sum_{\omega'\in\Omega}\mathcal
\PR_{i,\ssigma_i^*}(\omega)(\omega')u_i(\ssigma_i^*,\mathbf{s}_{-i}(\omega'))$$
and 
$$\PR_{i,\ssigma'}(\omega)(\omega')=\sum_{\{\omega'' :
f(\omega'',i,\ssigma')=\omega'\}}\mathcal P\mathcal R_i(\omega)(\omega''). 
$$
Now if 
$s_i^*=\mathbf{s}_i(\omega)$, then $f(\omega,i,s_i^*)$.  In this case,
it is easy to check that $\PR_{i,\ssigma_i^*}(\omega) = \PR_i(\omega)$,
so 
$\EU_i(\omega,s_i^*)=\EU_i(\omega) = \EU_i(s_i,\sigma_{-i})$, and TE4
clearly holds.  
On the other hand, if 
$s_i^*\neq\mathbf{s}_i(\omega)$, then 
$$\begin{array}{lll}
&&\EU_i(\omega,s_i^*)\\
&=&\sum_{\omega'\in\Omega}\sum_{\{\omega'':f(\omega'',i,s_i^*)=\omega'\}}\PR_i(\omega)(\omega'')u_i(s_i^*,\mathbf{s}_{-i}(\omega'))\\
&=&\sum_{\{\omega'\in\Omega: \strat_i(\omega') =
s_i^*\}}
\shortv{\\&& \ \ }
\sum_{\{\omega'':f(\omega'',i,s_i^*)=\omega',\, 
\strat_i(\omega'') = \strat_i(\omega)\}}\sigma_{-i}(\omega'')
u_i(s_i^*,\mathbf{s}_{-i}(\omega'))\\ 
&=&\sum_{\{\omega'\in\Omega: \strat_i(\omega') = s_i^*\}}
\shortv{\\&& \ \ }
\sum_{\{\omega'':f(\omega'',i,s_i^*)=\omega',\, 
\strat_i(\omega'') = \strat_i(\omega)\}}\sigma_{-i}(\omega'')
u_i(f(\omega'',i,s_i^*)).\\ 
\end{array}
$$
By definition, $u_i(f(\omega'',i,s_i^*))\leq
u_i(\strat_i(\omega''),\sigma_{-i})=u_i(\strat_i(\omega),\sigma_{-i})$. Thus,
$$\begin{array}{lll}
&&\EU_i(\omega,s_i^*)\\
&\le&\sum_{\{\omega'\in\Omega: \strat_i(\omega') = s_i^*\}} 
\shortv{\\&& \ \ }
\sum_{\{\omega'':f(\omega'',i,s_i^*)=\omega',\, 
\strat_i(\omega'')= \strat_i(\omega)\}}\sigma_{-i}(\omega'')
u_i(\strat_i(\omega),\sigma_{-i})\\ 
&=&u_i(\strat_i(\omega),\sigma_{-i}) 
\shortv{\\&& \ \ }
\sum_{\{\omega'\in\Omega: \strat_i(\omega') =
s_i^*\}}\sum_{\{\omega'':f(\omega'',i,s_i^*)=\omega',\, 
\strat_i(\omega'') = \strat_i(\omega)\}}\sigma_{-i}(\omega'')\\
&=&u_i(\strat_i(\omega),\sigma_{-i}).
\end{array}
$$
This completes the proof that TE4 holds, and the proof of the ``only if''
direction of the argument

The ``if'' is actually much simpler. Suppose, by way of contradiction,
that $\sigma$ is not coherent.  Then
there is a player $i$ and a strategy $s_i\in\text{supp}(\sigma_i)$ such
that for all $s_{-i}' \in S_i$, we have $u_i(s_i,\sigma_{-i})<u_i(s')$. It
follows that, for all counterfactual structures $M$, no matter what the
beliefs and the closest-state functions are in $M$,  
it is always strictly profitable for player $i$ to switch strategy from
$s_i$ to $s_i'$. Consequently, $i$ is not rational at a state $\omega$
such that $s_i(\omega)=s_i$, contradicting TE4.
\end{proof}

\commentout{
We conclude by comparing translucent equilibrium to iterated minimax
domination.  We begin by reviewing the relevant definitions.

\begin{definition} {\rm Strategy $s_i$ for player $i$ in game $\G
= (P,\SSigma_1,\ldots,\SSigma_N,u_1,\ldots,u_N)$ is 
\emph{minimax dominated with respect to $S'_{-i}\subseteq S_{-i}$} iff
there 
exists a strategy $s'_i \in S_i$ such that  
$$\min_{t_{-i} \in
S'_{-i}} u_i(s'_i, t_{-i}) > \max_{t_{-i} \in 
S'_{-i}} u_i(s_i, t_{-i}).$$}
\wbox
\end{definition}

\noindent Thus, player $i$'s strategy $s_i$ is minimax dominated
with respect to $S'_{-i}$ iff there exists
a strategy $s_i'$ for player $i$ such that the worst-case payoff for
player $i$ if he uses $s'$ is strictly better than his best-case
payoff if he uses $s$, given that the other players are restricted to using a
strategy in $S'_{-i}$.

\begin{definition} Given a game $\G =
(P,\SSigma_1,\ldots,\SSigma_N,u_1,\ldots,u_N)$,  
define $\NSD_j^k(\G)$ inductively: let $\NSD_j^0(\G) =
S_j$ and let $\NSD_j^{k+1}(\G)$
consist of the strategies in $\NSD_j^{k}(\G)$ not minimax 
dominated with respect to $\NSD_{-j}^{k}(\G)$.
Strategy $s \in S_i$ \emph{survives iterated minimax domination}
if $s \in \cap_k\NSD_i^k(\G)$.  \wbox
\end{definition}

As these definitions show, only pure strategies of individual players
survive (or do not survive) iterated minimax domination.  It is not hard
to show that both $C$ and $D$ survive iterated minimax domination.  But,
as we have observed, although $(C,C)$ and $(D,D)$ are translucent
equilibria in Prisoner's Dilemma, $(C,D)$ is not.  So not every profile
made up of strategies that survive iterated minimax domination is a
translucent equilibrium.  On the other hand, as the following example
shows, the strategies in a profile that is a translucent equilibrium may
not survive iterated minimax domination.  

\begin{example} {\rm Consider the two-player game where $S_1 =
\{s_1^1,s_2^1,s_3^1\}$, $S_2 = \{s_1^2,s_2^2,s_3^2\}$, and the utilities
are defined by the following table:
\begin{table}[htb]
\begin{center}
\begin{tabular}{|| c | c | c | c ||}
\hline
{} & $s_1^2$ & $s_2^2$ & $s_3^2$\\
\hline
$s_1^1$ & (1,1) & (1,2) & (1,3) \\
\hline
$s_2^1$ & (2,1) & (2,2) & (2,3) \\
\hline
$s_3^1$ & (3,1) & (3,2) & (3,3)) \\
\hline
\end{tabular}
\end{center}
\end{table}
Thus, if player $i$ plays $s_i^k$, then his utility is $k$, independent
of what the other player does.

It easily follows from Theorem~\ref{thm:translucenteq} that
$(s_1^2,s_2^2)$ is a translucent equilibrium, since it is coherent (we
can take $s'$ in Definition~\ref{dfn:coherent} to be $(s_1^1,s_1^2)$.
On the other hand, $s_1^j$ and $s_2^j$ are minimax dominated by $s_3^j$.
\wbox }
\end{example}
}

\subsection{Translucent equilibrium in social dilemmas}\label{sec:teinsd}

As we now show, our characterizations of
Propositions~\ref{prop:PD}--\ref{pro:Bertrand} can be used to provide
conditions on when translucent equilibrium exists in these social
dilemmas.

We start our analysis with Prisoner's Dilemma. We capture the
assumption that $\beta$ is the probability of cooperation, and that
players either cooperate or defect, by assuming
that players follow a mixed strategy where they cooperate with
probability $\beta$ and defect with probability $1-\beta$.

\begin{proposition}\label{pro:PDequilibrium1} 
$(\beta_1 C + (1-\beta_1)D, \beta_2 C + (1-\beta_2)D)$ is a
  translucent equilibrium of Prisoner's Dilemma iff
$\beta_i b \ge c$, for $i = 1,2$, or $\beta_1 = \beta_2 = 0$.  
\end{proposition}

\begin{proof} Suppose that $(\beta_1 C + (1-\beta_1)D, \beta_2 C +
  (1-\beta_2)D)$ is a translucent equilibrium.  If $\beta_1 > 0$, then
by Theorem~\ref{thm:translucenteq}, it easily follows
we must have $u_1(C,\beta_2 C + (1-\beta_2)D) \ge u_1(D,D)$.
Thus, we must have $\beta_2(b-c) + (1-\beta_2)(- c) \ge 0$;
equivalently, $\beta_2 b \ge c$.  Note that since $c > 0$, this means
that we must have $\beta_2 > 0$.  Similarly, if $\beta_2  > 0$, then
$\beta_1 b \ge c$.  By Theorem~\ref{prop:nashistranslucent}, $(D,D)$
is a translucent   equilibrium, since it is a Nash equilibrium.  
Thus, either $\beta_i b \ge c$ for $i = 1,2$ or $\beta_1 = \beta_2 =
0$.  

Conversely, if $\beta_i b \ge c$ for $i = 1, 2$, then it again easily
follows from Theorem~\ref{thm:translucenteq} that 
$(\beta_1 C + (1-\beta_1)D, \beta_2 C + (1-\beta_2)D)$ is a
  translucent equilibrium.  As we have observed, $(D,D)$ (the case
  that $\beta_1 = \beta_2 = 0$) is also a translucent equilibrium.
\end{proof}

Proposition~\ref{pro:PDequilibrium1} 
is not all that interesting, since it does not take into account
a player's beliefs regarding translucency.  The following definition
is a step towards doing this.
Suppose that $M$ is  counterfactual structure appropriate for a social
dilemma $\Gamma$.  
\emph{Player $i$ has type $\alpha_i$ in $M$} if, at each
state $\omega$ in $M$, player $i$ believes that 
if he intends to cooperate in $\omega$ and deviates from that,
then each other  agent will independently realize this with
probability $\alpha_i$ and will defect.
Formally, this means that,
at each state $\omega$ in $M$, we have 

\commentout{
\begin{itemize}
\item if $\strat_i(\omega) =
\ssigma_i^W$ (i.e., $i$ is cooperating in $\omega$ by playing his component of the
social-welfare maximing strategy profile),  then, for each player
$j\ne i$, we have $\PR_i(\omega)(\{\omega': f(\omega',i,s_i^N) =
\omega'', s_j(\omega') =s_j^C, s_j(\omega'') = s_j^N\}) = 
\alpha_i \PR_i(\omega)\{\omega': s_j(\omega') = s_j^C\})$.
\end{itemize}
}
\begin{itemize}
\item if $\strat_i(\omega) =
\ssigma_i^W$ (i.e., $i$ is cooperating in $\omega$ by playing his component of the
social-welfare maximing strategy profile),  then, for each $J\subseteq P\setminus\{i\}$, we have
$\PR_i(\omega)(\{\omega': f(\omega',i,s_i^N) =
\omega'', s_j(\omega') =s_j^C, s_j(\omega'') = s_j^N, \forall j\in J\}) = 
\alpha_i ^{|J|}\PR_i(\omega)\{\omega': s_j(\omega') = s_j^C, \forall j\in J\})$.
\end{itemize}


\begin{proposition}\label{pro:PDequilibrium}
$(\beta_1 C + (1-\beta_1)D, \beta_2 C + (1-\beta_2)D)$ is a
  translucent equilibrium of the Prisoner's Dilemma in a structure
  where player $i$ has type 
  $\alpha_i$ 
if and only if $\beta_1 = \beta_2 = 0$ or
$\alpha_i \beta_{3-i} b \ge c$ for $i \ge 1,2$.
\end{proposition}

\begin{proof}
Suppose that 
$\alpha_i \beta_{3-i} b \ge c$ for $i = 1 ,2$ or $\beta_1 = \beta_2 = 0$.
  We show that $(\beta_1 C + (1-\beta_1)D,
\beta_2 C + (1-\beta_2)D)$ is a
translucent equilibrium in a structure where 
player $i$ has type  $\alpha_i$. 
Consider the counterfactual structure
$M(\alpha_1,\alpha_2)$ defined as follows:
\begin{itemize}
\item $\Omega = \{C,D\}\times \{0,1\}^{2}$.  (The second component of the
state, which is an element of $\{0,1\}^{2}$, is used to determine the
closest-state function.  Roughly speaking, if $v_j = 1$, then player
$j$ learns  about a deviation if there is one; if $v_j = 0$, he does not.)
\item $\strat((s,v)) = s$.
\item $f((s,v),i,s_i^*) = 
\shortv{\\ \ \ }
\left\{
\begin{array}{ll}
(s,v) &\mbox{if $s_i = s_i^*$,}\\
(s',v) &\mbox{if $s_i \ne s_i^*$, where $s'_i = s_i^*$ and for $j
\ne i$,}\\
&\mbox{$s'_j = s_j$ if $v_j = 0$ and $s'_j = s^N_j$ if $v_j = 1$.}
\end{array}
\right.$

\noindent Thus, if player $i$ changes strategy from $s_i$ to $s_i'$, $s_i'\neq
 s_i$, then each other player $j$ either deviates to his component of
the Nash equilibrium or continues with his current strategy, 
depending on whether $v_j$ is 0 or 1. Roughly speaking, he switches to his
component of the Nash equilibrium if he learns about a deviation
(i.e., if $v_j = 1$).
\item $\PR_i(s,v)(s',v') =\left\{
\begin{array}{lll}
0 &\fullv{\mbox{if $s_i\ne s_i'$ or $v_i \ne v_i'$,}\\}
\shortv{\mbox{if $s_i\ne s_i'$ or} \\ &\mbox{$v_i \ne v_i'$,}\\}
\sigma_{3-i}(s_{3-i})\pi_i(v_{3-i})  &\mbox{if $s = s'$},
\end{array}
\right.$
where $\sigma_{3-i}$ is the distribution on strategies that puts
probability $\beta_{3-i}$ on $C$ and probability $1-\beta_{3-i}$ on
$D$, while $\pi_i$ is the distribution that puts probability $\alpha_i$
on 1 and probability $1-\alpha_i$ on 0.
\noindent Thus, if $s = s'$, then the 
probability of the $v'$ component is determined by
assuming that the other player ($3-i$) independently learns about a deviation
by $i$ with probability $\alpha_i$.
\end{itemize}

Clearly, $M(\alpha_1,\alpha_2)$ is a structure where player $i$ has type
$\alpha_i$, for $i=1,2$. 
We claim that $(\beta_1  C + (1-\beta_1)D, \beta_2 C + (1-\beta_2)D)$
is a translucent equilibrium in the 
counterfactual structure $M(\alpha_1,\alpha_2)$.  

There are two cases.  If $\beta_1 = \beta_2 = 0$, then 
let $\Omega'$ consist of all states of the form $((D,D),v)$.  It is
easy to check that TE1--4 hold.
If $\alpha_i \beta_{3-i} b \ge c$ for $i \ge 1,2$,
let $\Omega' = \Omega$. 
It is immediate that TE1, TE2, and TE3 hold.  Since $\alpha_i \beta_{3-i} b
\ge c$, it follows from Proposition~\ref{prop:PD} that player $i$ is
rational at each state in $\Omega$; thus, TE4 holds.  

For the converse, suppose that $M$ is a structure where player $i$ has type
$\alpha_i$, for $i=1,2$, and  
$(\beta_1  C + (1-\beta_1)D, \beta_2 C + (1-\beta_2)D)$ is a
translucent equilibrium in $M$.    If it is not the case that either
$\beta_1 = \beta_2 = 0$ or $\alpha_i \beta_{3-i} b \ge c$ for $i=1,2$,
without loss of generality we can assume that $\beta_1 > 0$ and 
that $\alpha_1 \beta_2 b < c$.  Let $\omega$ be a state in the set
$\Omega'$ where player 1 cooperates.  Since player 1 must  be rational
at $\Omega'$, we must have $u_1(C,\beta_2 C + (1-\beta_2) D) \ge 
((1-\beta_2) + \alpha_1\beta_2) u_1(D,D) + (1-\alpha_1)\beta_2 u_1(D,C)$.
Simple calculations show that this inequality holds iff
$\beta_2 (b-c) + (1-\beta_2)(-c) \ge (1-\alpha_1)\beta_2 b$ or, equivalently,
$\alpha_1 \beta_2 b \ge c.$  This gives the desired contradiction.
\end{proof}


The following propositions can be proved in a similar fashion. We leave details to the reader.

\begin{proposition}\label{pro:TDequilibrium1} 
$(\beta_1 H + (1-\beta_1)L, \beta_2 H + (1-\beta_2)L)$ is a
translucent equilibrium of the Traveler's Dilemma if and only if 
$b\le\frac{(H-L)\beta_i}{1-\beta_i}$, for $i = 1,2$, or $\beta_1 =
\beta_2 = 0$.
\wbox
\end{proposition}

\begin{proposition}\label{pro:TDequilibrium}
$(\beta_1 H + (1-\beta_1)L, \beta_2 H + (1-\beta_2)L)$ is a
  translucent equilibrium of the Traveler's Dilemma in a structure
  where player $i$ has type 
  $\alpha_i$ 
if and only if $\beta_1 = \beta_2 = 0$ or
$$b\le
\left\{
\begin{array}{ll}
\frac{(H-L)\beta_{3-i}}{1-\alpha_i\beta_{3-i}} &\mbox{if $\alpha_i\ge\frac12$}\\
\min\left(\frac{(H-L)\beta_{3-i}}{1-\alpha_i\beta_{3-i}},\frac{H-L-1}{1-2\alpha_i}\right)
&\mbox{if $\alpha_i<\frac12$.} 
\end{array}
\right.$$ 
\wbox
\end{proposition}

In the following propositions, let C and D denote, respectively, the
full contribution and the null contribution in the Public Goods
game. Given an $N$-tuple $(r_1,\ldots,r_N)$ of real numbers,
$\bar r_{-i}$ denotes the average of the numbers $r_j$, with $j\ne i$.

\begin{proposition}\label{pro:PGGequilibrium1} 
$(\beta_1 C + (1-\beta_1)D,\ldots, \beta_N C + (1-\beta_N)D)$ is a
translucent equilibrium of the Public Goods game if and only if 
$\rho\bar\beta_{-i}(N-1)\ge1-\rho$ for all $i$, or $\beta_i=0$ for
all $i$.  
\wbox
\end{proposition}

\begin{proposition}\label{pro:PGGequilibrium}
$(\beta_1 C + (1-\beta_1)D, \ldots,\beta_N C + (1-\beta_N)D)$ is a
  translucent equilibrium of the Public Goods game in a structure where player $i$ has type
  $\alpha_i$ 
if and only if $\beta_i=0$ for all $i$ or
$\alpha_i\rho\bar\beta_{-i}\ge1-\rho$ for all $i$.
\wbox
\end{proposition}

\begin{proposition}\label{pro:BCequilibrium} 
$(\beta_1 H + (1-\beta_1)L,\ldots, \beta_N H + (1-\beta_N)L)$ is a
translucent equilibrium of the Bertrand competition if and only if
$\beta_i= 0$ for all $i$, or 
$\prod_{j\ne i}\beta_j\ge\frac{L}{H}$ for all $i$.  
\wbox
\end{proposition}

\commentout{
\begin{proposition}\label{pro:BCequilibrium1}
$(\beta_1 H + (1-\beta_1)L, \ldots,\beta_N H + (1-\beta_N)L)$ is a
  translucent equilibrium of the Bertrand competition in a structure where player $i$ has type
  $\alpha_i$ 
if and only if $\beta_i = 0$ for all $i$, or $\prod_{j\ne
  i}\beta_j\ge\frac{LN}{H((1-\gamma_i)(N-1)+1)}$ for all $i$, where
$\gamma_i = (1-\alpha)\bar\beta_{-i}$.
\end{proposition}
}
\begin{proposition}\label{pro:BCequilibrium1}
$(\beta_1 H + (1-\beta_1)L, \ldots,\beta_N H + (1-\beta_N)L)$ is a
  translucent equilibrium of the Bertrand competition in a structure where player $i$ has type
  $\alpha_i$ if and only if $\beta_i = 0$ for all $i$, or $\prod_{j\ne
  i}\beta_j\ge \max\left(\frac{N(H-1)}{H}\prod_{j\neq
    i}\gamma_{i,j},f(\gamma_{i,j},N)LN/H\right)$ for 
  all $i$, where 
  $f(\gamma_{i,j},N)=\sum_{J\subseteq P \setminus \{i\}}(\prod_{j\in
    P \setminus (J \cup \{i\})}\gamma_{i,j}\prod_{j\in
    J}(1-\gamma_{i,j}))/(|J|+1)$ and 
$\gamma_{i,j} = (1-\alpha_i)\beta_{j}$.
\wbox
\end{proposition}

\fullv{
\paragraph{Acknowledgments:}  We thank Krzysztof Apt for useful
comments on an earlier draft of this paper.  
Joseph Halpern was supported in part by NSF grants 
IIS-0911036 and  CCF-1214844, AFOSR grant FA9550-08-1-0438, ARO grant
W911NF-14-1-0017, and by the DoD 
Multidisciplinary University Research Initiative (MURI) program administered by AFOSR under grant FA9550-12-1-0040.  Valerio Capraro was funded by the Dutch Research
Organization (NWO) grant 612.001.352.
}
}
}

\shortv{\bibliographystyle{eptcs}}
\fullv{ \bibliographystyle{chicago}}
\bibliography{joe}

\end{document}